\journal{Econometrics and Statistics}
\numberwithin{equation}{section}
\newcommand{\N}{{\mathord{\mathbb N}}}
\newcommand{\NORM}{{\mathcal{N}}}
\newcommand{\Z}{{\mathord{\mathbb Z}}}
\newcommand{\R}{{\mathord{\mathbb R}}}
\newcommand{\Prob}{{\mathord{\mathbb P}}}
\newcommand{\F}{{\mathord{\mathcal F}}}
\newcommand{\K}{{\mathord{\mathcal K}}}
\newcommand{\E}{{\mathord{\mathbb E}}}
\newcommand{\norm}[1]{\lVert#1\rVert}
\newcommand{\eps}{\epsilon}
\DeclareMathOperator*{\argmin}{arg\,min}
\newcommand{\bigO}{\mathcal{O}}
\newtheorem{thm}{Theorem}[section]
\newtheorem{cor}{Corollary}[section]
\theoremstyle{definition}
\newtheorem{assumpW}{Assumption}
\newcommand{\removelatexerror}{\let\@latex@error\@gobble}
\begin{document}

%%%%%%%%%%%%%%%%%%%%%%%%%%%%%%%%%%%%%%%%%%%%%%%%%%%%%
%%%%%%%%%%%%%%%%%%%%%%%%%%%%%%%%%%%%%%%%%%%%%%%%%%%%%

\begin{frontmatter}

%%%%%%%%%%%%%%%%%%%%%%%%%%%%%%%%%%%%%%%%%%%%%%%%%%%%
%%%%%%%%%%%%%%%%%%%%%%%%%%%%%%%%%%%%%%%%%%%%%%%%%%%%

%Title
\title{AdaVol: An Adaptive Recursive Volatility Prediction Method}

% Author(s)
\author[address1]{Nicklas Werge\corref{c1}}
\ead{nicklas.werge@upmc.fr}
\author[address1]{Olivier Wintenberger}
\ead{olivier.wintenberger@upmc.fr}
\address[address1]{LPSM, Sorbonne Universit\'e, 4 place Jussieu, 75005 Paris, France}
\cortext[c1]{Corresponding author}

% Abstract
\begin{abstract}
Quasi-Maximum Likelihood (QML) procedures are theoretically appealing and widely used for statistical inference.
While there are extensive references on QML estimation in batch settings, it has attracted little attention in streaming settings until recently.
An investigation of the convergence properties of the QML procedure in a general conditionally heteroscedastic time series model is conducted, and the classical batch optimization routines extended to the framework of streaming and large-scale problems.
An adaptive recursive estimation routine for GARCH models named AdaVol is presented.
The AdaVol procedure relies on stochastic approximations combined with the technique of Variance Targeting Estimation (VTE).
This recursive method has computationally efficient properties, while VTE alleviates some convergence difficulties encountered by the usual QML estimation due to a lack of convexity.
Empirical results demonstrate a favorable trade-off between AdaVol's stability and the ability to adapt to time-varying estimates for real-life data.
\end{abstract}

%Keywords - max. 6 keywords
\begin{keyword}
volatility models \sep quasi-likelihood \sep recursive algorithm \sep GARCH \sep prediction method \sep stock index
\end{keyword}

%%%%%%%%%%%%%%%%%%%%%%%%%%%%%%%%%%%%%%%%%%%%%%%%%%%%
%%%%%%%%%%%%%%%%%%%%%%%%%%%%%%%%%%%%%%%%%%%%%%%%%%%%

\end{frontmatter}

%%%%%%%%%%%%%%%%%%%%%%%%%%%%%%%%%%%%%%%%%%%%%%%%%%%%
%%%%%%%%%%%%%%%%%%%%%%%%%%%%%%%%%%%%%%%%%%%%%%%%%%%%

\section{Introduction}

%%%%%%%%%%%%%%%%%%%%%%%%%%%%%%%%%%%%%%%%%%%%%%%%%%%%
%%%%%%%%%%%%%%%%%%%%%%%%%%%%%%%%%%%%%%%%%%%%%%%%%%%%

Time series analysis has attracted much attention in the last three decades.
A central aspect of time series analysis is modeling heteroscedasticity of the conditional variance, e.g., volatility clustering in financial time series. 
Some well-known models incorporating this feature are the AutoRegressive Conditional Heteroscedasticity (ARCH) model and the Generalized ARCH (GARCH) model introduced by \cite{engle1982} and \cite{bollerslev1986}, respectively.
Many reasons can explain these models' success; they constitute a stationary time series model with a time-varying conditional variance, and secondly, they may model time series with heavier tails than the Gaussian ones, which often occurs in financial time series.

Quasi-Maximum Likelihood (QML) estimation is widely used for statistical inference in GARCH models due to their appealing theoretical nature and tolerance to overdispersion, which is often observed in empirical data.
This paper studies the Quasi-Maximum Likelihood Estimator (QMLE) for the broader class of conditionally heteroscedastic time series models of multiplicative form given by
\begin{align} \label{CHTSM}
X_t = h_t(\theta_0) Z_t, \quad t \in \Z,
\end{align}
where $\theta_0$ is the true underlying parameter vector, $(Z_t)$ is a sequence of i.i.d. random variables with $\E [Z_0] =0$ and $\E [Z^2_0] =1$, and the (non-negative) volatility process $(h_t)_{t \in \Z}$ is defined as
\begin{align} \label{CHTSM_h}
h_t (\theta) = g_\theta \big( X_{t-1}, \dots, X_{t-p}, h_{t-1}(\theta) , \dots, h_{t-q}(\theta)  \big), \quad p,q \geq 0.
\end{align}
Suppose that the parameter set $\Theta \subset \R^d$ and $\{ g_\theta \vert \theta \in \Theta \}$ denotes the (finite) parametric family of non-negative functions on $\R^p \times [0, \infty)^q$ satisfying certain regularity conditions. 
We also require that $h_t$ is $\F_{t-1}$-measurable for all $t \in \Z$, where $\F_t = \sigma ( Z_k : k \leq t)$ denotes the $\sigma$-field generated by the random variables $\{ Z_k : k \leq t \}$.

The stability of model \eqref{CHTSM}-\eqref{CHTSM_h} is accomplished under the assumption that $g_\theta$ is a contraction. 
This condition is a random Lipschitz coefficient condition, where the Lipschitz coefficient has a negative logarithmic moment. 
The notion of contractivity is clarified in \cite{straumannmikosch2007} where they study QML inference of general conditionally heteroscedastic models with emphasis on the approximation $(\widehat{h}_t)$ of the stochastic volatility $(h_t)$.

QML estimation of the parameters in the class of conditionally heteroscedastic time series models has been studied frequently in recent years, see e.g., \cite{berkes2003}, \cite{francq2004}, \cite{straumannmikosch2007}, and \cite{wintenberger2013}. 
However, all these references consider iterative estimation, where one assembles a batch of data and afterward performs the statistical inference.
Thus, one evaluates an objective function consisting of a sum of $n$ loss terms.
Each iteration would then have a cost of $\bigO(nd)$, making the recursion cost $\bigO(mnd)$, where $m$ is the number of iterations.
%Thus, the total complexity of $n$ observations is $\bigO(mn^{2}d)$.
As the amount of data grows, these optimizers become prohibitively expensive and increasingly computationally inefficient.
Moreover, iterative optimizers become unsuitable for streaming settings where we are modeling and predicting data as they arrive.

Many financial practices, such as banks, asset managers, and financial services institutes, find themselves estimating thousands of volatility models every day for risk and pricing purposes.
In addition, the sampling of financial time series is increasingly at high frequency.
Therefore, recursive procedures must undoubtedly be advantageous since one only processes observations once.
In recursive QML estimation, we update the previous QML estimate with the new observations at time $t$ in order to produce the QML estimate of the parameters at time $t$.

Thus, in modern statistical analysis, it is becoming increasingly common to work with streaming data where one observes only a group of observations at a time. 
Naturally, this has led to an expanded interest in time-scalable recursive estimation procedures with a cost of only $\bigO(d)$ computations per recursion, e.g., see \cite{bottou2007}.
However, there has only been given a little amount of attention to recursive estimation in conditionally heteroscedastic time series models.

\cite{dahlhaus2007} presented a recursive method for estimating the parameters of an ARCH process.
Under sufficient assumptions on the underlying process, \cite{aknouche2006} showed consistency of their recursive least squares method for GARCH processes, and \cite{kierkegaard2000} also developed a recursive estimation method for GARCH processes supported by empirical evidence.
Convergence analysis of the recursive QML estimator for GARCH processes based on stochastic approximations with Markovian dynamics using a resetting mechanism has been previously presented (\cite{gerencser2010}).
A self-weighted recursive estimation algorithm for GARCH models was proposed in \cite{cipra2018} with a robustification in \cite{hendrych2018}. 
However, none of the above references mention problems with convexity or address the obstacles that may occur when the true parameter $\theta_{0}$ is close to the boundary of the parameter space.

The difficulty of estimating time-varying parameters of statistical models increases in the setting of streaming data. 
To sustain computational efficiency and be adaptive to changes in the estimates, one may decrease the number of observations in each iteration in the optimization procedure, which may decrease the stability of the statistical inference. 
We propose a natural adaptation of the QML method, relying on stochastic approximations combined with the Variance Targeting Estimation (VTE) technique, which we call AdaVol.
This recursive method is time-scalable and memory-efficient, as it only requires the previous estimate to process new observations, and it only needs to treat the observations once.
We present empirical evidence that AdaVol achieves a favorable trade-off between adaptability and stability.

The rest of the paper is organized as follows:
Section \ref{SEC_QMLE} introduces the QML procedure for the general class of conditionally heteroscedastic time series models of multiplicative form and investigates the asymptotic properties of the Quasi-Likelihood (QL) function (Section \ref{SEC_AC}). 
Next, in Section \ref{sec:qml_garchpq}, we present the QML estimation of the GARCH parameters.
In Section \ref{sec:adavol}, we present our adaptive approach for recursively estimating GARCH parameters named AdaVol.
We examine the AdaVol procedure on simulated and real-life observations in Section \ref{sec:applications}, and some concluding remarks are made in Section \ref{sec:discussion}.

%%%%%%%%%%%%%%%%%%%%%%%%%%%%%%%%%%%%%%%%%%%%%%%%%%%%
%%%%%%%%%%%%%%%%%%%%%%%%%%%%%%%%%%%%%%%%%%%%%%%%%%%%

\section{QML Estimation in Conditionally Heteroscedastic Time Series Models} \label{SEC_QMLE}

%%%%%%%%%%%%%%%%%%%%%%%%%%%%%%%%%%%%%%%%%%%%%%%%%%%%
%%%%%%%%%%%%%%%%%%%%%%%%%%%%%%%%%%%%%%%%%%%%%%%%%%%%

The approximate QMLE $\widehat{\theta}_n^{*}$ is defined as
\begin{align} \label{GARCH_QLIK}
\widehat{\theta}_n^{*} \in \argmin_{\theta \in \K} \widehat{L}_n(\theta),
\end{align}
where the parameter set $\K$ is a suitable compact subset of the parameter space $\Theta$. 
The QL function $L_n (\theta)$ and approximate QL function $\widehat{L}_n (\theta)$ are given by
\begin{align} \label{QMLE_DEF} 
L_n (\theta) = \sum_{t=1}^n l_t (\theta) 
\text{ and }
\widehat{L}_n (\theta) =\sum_{t=1}^n \widehat{l}_t (\theta),
\end{align}
with QL losses, denoted $l_t (\theta)$ and $\widehat{l}_t (\theta)$, given as
\begin{align} \label{ONE_LIKE_HAT}
l_t (\theta) =  \frac{1}{2}  \left(\frac{X_t^2}{h_t(\theta)} + \log h_t(\theta) \right)
\text{ and }
\widehat{l}_t (\theta) =  \frac{1}{2} \left(\frac{X_t^2}{\widehat{h}_t(\theta)} + \log \widehat{h}_t(\theta) \right),
\end{align}
where $(\widehat{h}_t)$ is an approximation of $(h_t)$ defined recursively for $t \geq 1$ as in \eqref{CHTSM_h} with initialization $\widehat{h}_{-q+1} = \cdots = \widehat{h}_{0} = 0$ or any deterministic constant.
From \cite[Proposition 5.2.12]{straumann2005}, we know the initialization error between $(\widehat{h}_t)$ and the true $(h_t)$ will vanish exponentially fast almost surely.
Assuming $Z_0$ is standard normal distributed, we may note $X_t$ is also Gaussian with variance $h_t$ conditioned on $\F_{t-1}$.
The QL function $L_n (\cdot)$ in \eqref{QMLE_DEF} is derived under this Gaussian assumption.

The consistency and asymptotic properties of the QMLE $\widehat{\theta}_n^{*}$ combined with the robustness of the QL function for overdispersion make the method highly used in practice (e.g., see \cite{patton2006}). 
Under the assumptions in \cite[N.1, N.2, N.3 and N.4]{straumannmikosch2007}, the QMLE $\widehat{\theta}_n^{*}$ is strongly consistent and asymptotically normal, that is
\begin{align} \label{eq:qmle_asymp_prop}
\widehat{\theta}_n^{*} \overset{\text{a.s.}}{\rightarrow} \theta_0
 \text{ and } 
\sqrt{n} \left( \widehat{\theta}_n^{*} - \theta_0 \right) \rightarrow \NORM \left(0, V_0 \right)
\text{ as } n \rightarrow \infty,
\end{align}
with $\theta_0$ as the true parameter vector and $V_0$ the asymptotic covariance matrix.

Unfortunately, these asymptotic properties in \eqref{eq:qmle_asymp_prop} come with a drawback on the QL loss; the consistency is achieved through careful domination of logarithmic moments. 
The concavity of logarithms makes the criterion insensitive to extreme values, but it also implies that the criterion itself behaves as a concave function. 
As most optimization algorithms are based on convex assumptions, this is striking.

In the next section, we show that the approximate Hessian $\widehat{H}_n(\theta) = n^{-1}\nabla_{\theta}^2 \widehat{L}_{n} \left(\theta \right)$ admits strictly positive eigenvalues for $n$ sufficiently large dependent on the model specifications and the underlying data process. 
This means that for sufficiently large batch sizes of observations, the QMLE $\widehat{\theta}_n^{*}$ can be seen as the unique solution of a locally strongly convex optimization problem; the existence and uniqueness of $\widehat{\theta}_n^{*}$ ensure that usual iterative optimization routines can efficiently approximate it for $n$ large enough.

%%%%%%%%%%%%%%%%%%%%%%%%%%%%%%%%%%%%%%%%%%%%%%%%%%%%
%%%%%%%%%%%%%%%%%%%%%%%%%%%%%%%%%%%%%%%%%%%%%%%%%%%%

\subsection{Asymptotic Properties of the QL Function} \label{SEC_AC}

%%%%%%%%%%%%%%%%%%%%%%%%%%%%%%%%%%%%%%%%%%%%%%%%%
%%%%%%%%%%%%%%%%%%%%%%%%%%%%%%%%%%%%%%%%%%%%%%%%%

To establish the asymptotic local convexity of the QL function of the model described by \eqref{CHTSM}-\eqref{CHTSM_h}, we need the following assumptions: Assumption \ref{assump:cond_exists_solution}, \ref{assump:cond_bounds}, and \ref{assump:cond_comp_inde}, which naturally emerges from the arguments and properties \cite{straumannmikosch2007} made to ensure stability of the QL function and QMLE procedure. 
We will use two different matrix norms, namely, let $\norm{A}_{op}$ denote the matrix operator norm of the matrix $A \in \R^{d \times d}$ with respect to the Euclidean norm, i.e., $\norm{A}_{op} = \sup_{v \neq 0} \vert Av \vert/\vert v \vert$, and denote $\norm{A}_{\K}$ the norm of the continuous matrix-valued function $A$ on $\K$, i.e., $\norm{A}_{\K} = \sup_{x \in \K} \norm{A(x)}_{op}$, where $\K$ is a compact set of $\R^d$.

\begin{assumpW}\label{assump:cond_exists_solution}
The model \eqref{CHTSM}-\eqref{CHTSM_h} with $\theta = \theta_0$ admits a unique stationary ergodic solution.
\end{assumpW}

\begin{assumpW} \label{assump:cond_bounds}
Let $\K \subset \Theta$ be a compact set with true parameter vector $\theta_0 \in \K$ in the interior. 
The random functions fulfill certain conditions, such that $\E [\norm{l_0}_\K] < \infty$, $\E [\norm{ \nabla_{\theta}^{2} l_0}_\K] < \infty$, and furthermore have the following uniform convergences 
$\norm{n^{-1} \widehat L_n  - L_n}_\K \overset{\text{a.s.}}{\longrightarrow} 0$
and 
$n^{-1}\norm{\nabla_{\theta}^{2} \widehat{L}_n  - \nabla_{\theta}^{2} L_n }_\K \overset{\text{a.s.}}{\longrightarrow} 0$
as $n \rightarrow \infty$.
\end{assumpW}

\begin{assumpW}\label{assump:cond_comp_inde}
The components of the vector $\nabla_{\theta} g_{\theta}(X_0,h_0)$ from \eqref{CHTSM_h} with $\theta = \theta_0$ are linearly independent random variables.
\end{assumpW}

The following Theorem \ref{THM_Convex} is an extension of \cite{ip2006}, which established similar results for the likelihood function of GARCH models under the assumption that $(X_t)$ is strictly stationary and strongly mixing with geometric rate, and $(Z_t)$ is Gaussian. 
Solving the QML estimation problem in \eqref{GARCH_QLIK} for $\widehat{\theta}_n^{*}$ is known to be computationally heavy as one has to find the solution of a non-linear equation, namely \eqref{QMLE_DEF}. 
Nonetheless, Theorem \ref{THM_Convex} ensures the existence of an $N$ such that we have a unique global QMLE $\widehat{\theta}_n^{*}$ for all $n \geq N$.

\begin{thm} \label{THM_Convex}
Under Assumption \ref{assump:cond_exists_solution}, \ref{assump:cond_bounds}, and \ref{assump:cond_comp_inde}, there exist positive constants $C, \delta>0$, and a random positive integer $N \in \N$ such that
\begin{align} \label{CONVEX_HESSIAN_QLIK}
g^T\widehat{H}_n (\theta ) g > C g^Tg, \text{ } \forall n \geq N, \quad \text{a.s.,}
\end{align}
for all $\theta \in B(\theta_0,\delta)$ and $g \in \R^{d} \setminus \{0\}$.
\end{thm}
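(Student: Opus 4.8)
The plan is to pass from the approximate, truncated Hessian $\widehat H_n$ to a deterministic limit in three moves — killing the initialization/truncation error, replacing the sample average by an expectation via an ergodic theorem, and computing that expectation at $\theta_0$ — and then to conclude on a small ball by continuity. Write $H_n(\theta) := n^{-1}\nabla_{\theta}^2 L_n(\theta)$ and $H(\theta) := \E[\nabla_{\theta}^2 l_0(\theta)]$. By the second uniform convergence in Assumption \ref{assump:cond_bounds}, $n^{-1}\norm{\nabla_{\theta}^2\widehat L_n - \nabla_{\theta}^2 L_n}_\K = \norm{\widehat H_n - H_n}_\K \to 0$ a.s. Under Assumption \ref{assump:cond_exists_solution} the process $(X_t,h_t(\theta_0))$ is stationary and ergodic, so $(\nabla_{\theta}^2 l_t(\cdot))_{t}$ is a stationary ergodic sequence of continuous matrix-valued random functions on $\K$ with $\E[\norm{\nabla_{\theta}^2 l_0}_\K]<\infty$ (Assumption \ref{assump:cond_bounds}); the uniform ergodic theorem (the $C(\K)$-valued strong law used in \cite{straumannmikosch2007}) then yields $\norm{H_n - H}_\K \to 0$ a.s. and continuity of $\theta\mapsto H(\theta)$ on $\K$. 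Hence $\norm{\widehat H_n - H}_\K \to 0$ a.s.

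Next I would identify $H(\theta_0)$. Differentiating $l_0(\theta) = \tfrac12\big(X_0^2/h_0(\theta) + \log h_0(\theta)\big)$ twice gives, schematically,
\[
\nabla_{\theta}^2 l_0(\theta) = \tfrac12\Big[ h_0(\theta)^{-2}\big(2X_0^2/h_0(\theta) - 1\big)\,\nabla_{\theta}h_0(\theta)\nabla_{\theta}h_0(\theta)^T + h_0(\theta)^{-1}\big(1 - X_0^2/h_0(\theta)\big)\,\nabla_{\theta}^2 h_0(\theta)\Big].
\]
Evaluating at $\theta_0$, where $X_0^2/h_0(\theta_0) = Z_0^2$ with $Z_0$ independent of $\F_{-1}$ and $\E[Z_0^2]=1$, and conditioning on $\F_{-1}$ (which makes $h_0(\theta_0)$, $\nabla_{\theta}h_0(\theta_0)$, $\nabla_{\theta}^2 h_0(\theta_0)$ measurable), the $\nabla_{\theta}^2 h_0$ term vanishes and one is left with
\[
H(\theta_0) = \tfrac12\,\E\!\left[\frac{\nabla_{\theta}h_0(\theta_0)\,\nabla_{\theta}h_0(\theta_0)^T}{h_0(\theta_0)^2}\right] \succeq 0,
\]
which is strictly positive definite precisely when no nonzero $v\in\R^d$ satisfies $v^T\nabla_{\theta}h_0(\theta_0)=0$ a.s. Differentiating the recursion \eqref{CHTSM_h} gives $\nabla_{\theta}h_0(\theta_0) = \nabla_{\theta}g_{\theta_0}(X_{-1},\dots,h_{-1},\dots) + \sum_{i=1}^q \partial_{h_{-i}}g_{\theta_0}\,\nabla_{\theta}h_{-i}(\theta_0)$; iterating this contraction and invoking Assumption \ref{assump:cond_comp_inde} (linear independence of the components of $\nabla_{\theta}g_{\theta_0}(X_0,h_0)$) rules out such a $v$. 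Set $2C := \lambda_{\min}(H(\theta_0)) > 0$.

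Finally, since $H$ is continuous on $\K$ and $\theta_0$ lies in the interior of $\K$, choose $\delta>0$ with $\overline{B(\theta_0,\delta)}\subset\K$ and $g^T H(\theta) g \ge 2C\,g^Tg$ for all $\theta\in B(\theta_0,\delta)$, $g\in\R^d$. The a.s. convergence $\norm{\widehat H_n - H}_\K\to 0$ gives a random $N\in\N$ with $\norm{\widehat H_n - H}_\K < C$ for all $n\ge N$; then for $\theta\in B(\theta_0,\delta)$, $n\ge N$ and $g\neq 0$,
\[
g^T\widehat H_n(\theta)g \ge g^T H(\theta)g - \norm{\widehat H_n(\theta) - H(\theta)}_{op}\,g^Tg > 2C\,g^Tg - C\,g^Tg = C\,g^Tg,
\]
which is \eqref{CONVEX_HESSIAN_QLIK}. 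The two constants $C,\delta$ are deterministic and $N$ is random, as required.

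The main obstacle is the positive-definiteness step: propagating the linear independence of the components of $\nabla_{\theta}g_{\theta_0}(X_0,h_0)$ through the feedback recursion to conclude that $\nabla_{\theta}h_0(\theta_0)$ has a.s. linearly independent components, so that $H(\theta_0)\succ 0$. Everything else — the reduction $\widehat H_n \rightsquigarrow H_n \rightsquigarrow H$, the conditional-expectation computation of $H(\theta_0)$, and the continuity/triangle-inequality endgame — is bookkeeping once the two uniform convergences of Assumption \ref{assump:cond_bounds} and the ergodicity of Assumption \ref{assump:cond_exists_solution} are in hand.
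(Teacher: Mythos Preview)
Your proposal is correct and follows essentially the same route as the paper: pass from $\widehat H_n$ to $H_n$ via the second uniform convergence in Assumption~\ref{assump:cond_bounds}, from $H_n$ to $H=\E[\nabla_\theta^2 l_0]$ via the uniform ergodic theorem, establish $H(\theta_0)\succ 0$, extend to a ball by continuity of $H$, and finish with a triangle inequality. The only difference is that where you sketch the positive-definiteness of $H(\theta_0)$ by hand (your acknowledged ``main obstacle''), the paper simply invokes \cite[Lemma~7.2]{straumannmikosch2007}.
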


The result above shows local strong convexity of the QL function $\widehat{L}_n$. 
The following corollary arises from the proof of Theorem \ref{THM_Convex}:

\begin{cor} \label{COR_Convex}
Under Assumption \ref{assump:cond_exists_solution}, \ref{assump:cond_bounds}, and \ref{assump:cond_comp_inde}, the QMLE $\widehat{\theta}_n^{*}$ exists and is unique, that is
\begin{align*}
\widehat{\theta}_n^{*} = \argmin_{\theta \in \K} \widehat{L}_n (\theta).
\end{align*}
\end{cor}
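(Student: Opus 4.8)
The plan is to split the claim into \emph{existence}, which is a soft compactness argument, and \emph{uniqueness}, which is where Theorem \ref{THM_Convex} does the work; the bridge between the merely \emph{local} strong convexity of Theorem \ref{THM_Convex} and the \emph{global} uniqueness asserted here will be supplied by the strong consistency already recorded in \eqref{eq:qmle_asymp_prop}, which forces every minimizer of $\widehat L_n$ into the ball $B(\theta_0,\delta)$ once $n$ is large.

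For existence, $\theta\mapsto\widehat h_t(\theta)$ is continuous and strictly positive on $\K$ (a finite composition of the continuous maps $g_\theta$ started from a fixed initialization, with a uniform positive lower bound on $\K$ by the regularity conditions imposed on $\{g_\theta\}$), so each $\widehat l_t$ is continuous and $\widehat L_n$ is continuous on the compact set $\K$; hence $\argmin_{\theta\in\K}\widehat L_n(\theta)$ is a nonempty compact set for every $n$, which is all that \eqref{GARCH_QLIK} requires. For the localization step I would use that, by Assumption \ref{assump:cond_bounds} (the uniform convergence $\norm{n^{-1}\widehat L_n-L_n}_\K\to0$) together with the uniform ergodic theorem applied to the integrable stationary ergodic sequence $(l_t)$, one has $n^{-1}\widehat L_n\to L:=\E[l_0(\cdot)]$ uniformly on $\K$ a.s., and that $\theta_0$ is the unique minimizer of the continuous function $L$ over $\K$ (the identifiability underlying \eqref{eq:qmle_asymp_prop}). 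A standard $M$-estimation argument then upgrades this to the statement that \emph{any} sequence $\theta_n\in\argmin_{\theta\in\K}\widehat L_n(\theta)$ satisfies $\theta_n\to\theta_0$ a.s.; in particular there is a random $N_1$ with $\argmin_{\theta\in\K}\widehat L_n(\theta)\subset B(\theta_0,\delta/2)$ for all $n\geq N_1$ a.s.

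For uniqueness, let $N$ and $C,\delta$ be as in Theorem \ref{THM_Convex} and set $N^{\star}:=\max\{N,N_1\}$. For $n\geq N^{\star}$ the inequality \eqref{CONVEX_HESSIAN_QLIK} says that $\widehat L_n$ has positive-definite Hessian at every point of the convex set $B(\theta_0,\delta)$, hence is strictly (indeed strongly) convex there, so its restriction to the convex compact set $\overline{B(\theta_0,\delta/2)}\subset B(\theta_0,\delta)$ has exactly one minimizer. Since, for $n\geq N^{\star}$, every element of $\argmin_{\theta\in\K}\widehat L_n(\theta)$ lies in $B(\theta_0,\delta/2)$ and is a fortiori a minimizer of $\widehat L_n$ over $\overline{B(\theta_0,\delta/2)}$, strict convexity forces $\argmin_{\theta\in\K}\widehat L_n(\theta)$ to be a singleton. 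Thus for all $n\geq N^{\star}$ the minimizer $\widehat\theta_n^{*}$ exists and is unique, which is the assertion (with the random threshold $N^{\star}$ understood, as in the discussion preceding Theorem \ref{THM_Convex}).

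The main obstacle is conceptual rather than computational: local strong convexity on $B(\theta_0,\delta)$ does not by itself preclude spurious minimizers of $\widehat L_n$ elsewhere in $\K$, so the argument genuinely needs the \emph{global} information that the empirical criterion converges uniformly on all of $\K$ to a function identified at $\theta_0$; some care is also needed to make the $M$-estimation localization apply to \emph{every} minimizer rather than merely to a measurable selection, which is precisely what uniform convergence on the whole of $\K$ delivers.
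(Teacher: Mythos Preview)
Your proof is correct and follows essentially the same strategy as the paper's: existence by compactness and continuity, localization of all minimizers into $B(\theta_0,\delta)$ via the uniform convergence $n^{-1}\widehat L_n\to \E[l_0(\cdot)]$ on $\K$ together with identifiability of $\theta_0$ (the paper phrases this as a Pfanzagl argument), and finally uniqueness from the local strong convexity supplied by Theorem~\ref{THM_Convex}. Your presentation is in fact tidier in two places: you separate existence from uniqueness explicitly, and your use of the smaller ball $B(\theta_0,\delta/2)$ makes the passage from ``every minimizer lies in the ball'' to ``strict convexity forces a singleton'' cleaner than the paper's somewhat compressed conclusion.
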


Local strong convexity is crucial for guaranteeing the convergence of an optimization algorithm, although some methods go beyond this point (\cite{ward2018}).
Thus, Theorem \ref{THM_Convex} is an essential result for computing the QMLE $\widehat{\theta}_n^{*}$ parameters of the model in \eqref{CHTSM}-\eqref{CHTSM_h}. 
Nevertheless, to guarantee the property in \eqref{CONVEX_HESSIAN_QLIK}, we need a sufficiently large (and maybe unbounded) random $N$, which depends on the true parameter vector $\theta_{0}$, the parameter estimates $(\widehat{\theta}_{t}^{*})$, and the observations $(X_t)$. 
One often has a fixed size of observations in practice, so the iterative algorithm may not converge. 
To our experience, this phenomenon may occur when the true parameter vector $\theta_{0}$ is close to the boundary of $\K$, or if the initial values $\widehat{\theta}_{0}^{*}$ are far away from the true parameters $\theta_{0}$.

%%%%%%%%%%%%%%%%%%%%%%%%%%%%%%%%%%%%%%%%%%%%%%%%%%%%
%%%%%%%%%%%%%%%%%%%%%%%%%%%%%%%%%%%%%%%%%%%%%%%%%%%%

\subsection{QML Estimation of GARCH$(p,q)$ Parameters} \label{sec:qml_garchpq}

%%%%%%%%%%%%%%%%%%%%%%%%%%%%%%%%%%%%%%%%%%%%%%%%%%%%
%%%%%%%%%%%%%%%%%%%%%%%%%%%%%%%%%%%%%%%%%%%%%%%%%%%%

The general class of conditionally heteroscedastic time series models includes the very popular ARCH and GARCH models. 
For more than three decades, these models have attracted considerable amounts of attention in the literature since their introduction.
A process $(X_t)$  is called a GARCH$(p,q)$ process with parameter vector $ \theta = ( \omega , \alpha_1, \dots , \alpha_p, \beta_1 , \dots, \beta_q )^T$, if it satisfies  
\begin{align} \label{EQ_GARCH}
    \begin{cases}
      X_t = \sigma_t Z_t,\\
       \sigma^2_t = \omega + \sum_{i=1}^p \alpha_i  X_{t-i}^2 + \sum_{j=1}^q \beta_j \sigma^2_{t-j}, 
    \end{cases}
\end{align}
where $\omega$, $\alpha_i$, and $\beta_j$ for $1 \leq i \leq p$ and $1 \leq j \leq q$ are non-negative parameters ensuring the non-negativity of the conditional variance process $(\sigma_t^2)$. 
The innovations $(Z_t)$ is a sequence of i.i.d. random variables with $\E [Z_0] = 0$ and $\E [Z_0^2]=1$. 
Likewise, one can define an ARCH$(p)$ process by setting $\beta_j=0$ for $1 \leq j \leq q$ in \eqref{EQ_GARCH}. 
The GARCH$(p,q)$ process $(X_t)$ given in \eqref{EQ_GARCH} has QL losses given by $\widehat{l}_t(\theta) = 2^{-1} (X_t^2/\widehat{\sigma}_t^2 (\theta) + \log \widehat{\sigma}_t^2 (\theta))$ with first-order derivative
\begin{align} \label{GARCH_QLIK_D}
 \nabla_{\theta} \widehat{l}_t (\theta)
= \nabla_{\theta} \widehat{\sigma}_t^2 (\theta) \left( \frac{ \widehat{\sigma}_t^2(\theta) - X_t^2}{2 \widehat{\sigma}_t^4(\theta)} \right)
\end{align}
and second-order derivative
\begin{align} 
 \nabla_{\theta}^{2} \widehat{l}_t (\theta)
=%&
  \nabla_{\theta} \widehat{\sigma}_t^2 (\theta)^T  \nabla_{\theta} \widehat{\sigma}_t^2 (\theta) \left(\frac{2X_t^2 - \widehat{\sigma}_t^2(\theta) }{2 \widehat{\sigma}_t^6(\theta)} \right) %\nonumber
%\\ &
+ \nabla_{\theta}^{2} \widehat{\sigma}_t^2 (\theta) \left( \frac{ \widehat{\sigma}_t^2(\theta) - X_t^2}{2 \widehat{\sigma}_t^4(\theta)} \right), \label{GARCH_QLIK_DD}
\end{align}
where $\nabla_{\theta} \widehat{\sigma}_t^2 (\theta) = \vartheta_t(\theta) + \sum_{j=1}^q   \beta_j  \nabla_{\theta} \widehat{\sigma}_{t-j}^2 (\theta)$ with $\vartheta_t (\theta)= (1, X_{t-1}^2, \dots, X_{t-p}^2, \widehat{\sigma}_{t-1}^2(\theta), \dots, \widehat{\sigma}_{t-q}^2 (\theta)  )^T \in \R^{p+q+1}$ and Hessian $\widehat{H}_n (\theta)
= n^{-1} \sum_{t=1}^{n} \nabla_{\theta}^{2} \widehat{l}_t (\theta)$.

The equations \eqref{EQ_GARCH} creates a complicated probabilistic structure that is not easily understood, although it looks relatively simple.
The conditions ensuring the existence and uniqueness of a stationary solution to the equations \eqref{EQ_GARCH} for GARCH$(1,1)$ was provided by \cite{nelson1990}.
\cite{bougerolpicard1992} later showed it for the GARCH$(p,q)$ model using that GARCH$(p,q)$ can be embedded in a Iterated Random Lipschitz Map (IRLM). 
See \cite{bougerol1993} for a formal definition of IRLMs.

We can illustrate the IRLM method on the GARCH$(1,1)$ model with parameter vector $\theta = (\omega,\alpha_1,\beta_1)^T$. 
The IRLM for $\sigma_t^2$ is then given by $\sigma_t^2 = A_t \sigma_{t-1}^2 + B_t$ with $t \in \Z$, where $A_t = \alpha_1 Z_{t-1}^2 + \beta_1$ and $B_t = \omega$. 
Note $\left( (A_t,B_t) \right)$ constitutes an i.i.d. sequence.
From the literature on IRLMs it is well known that the conditions $ \E [\log\vert A_0 \vert] < 0$ and $\E [\log^+ \vert B_0 \vert ]< \infty$ guarantee the existence and uniqueness of a strictly stationary solution of the IRLM $Y_t = A_t Y_{t-1} + B_t$ for $t \in \Z$ provided $\left( (A_t,B_t) \right)$ is a stationary ergodic sequence. 
Applying this to the GARCH$(1,1)$ model, we get the known sufficient condition for the existence of a stationary solution, namely $\E [ \log(\alpha_1 Z_0^2 + \beta_1)] <0$.
This also implies $\beta_1 < 1$ since $\log(\beta_1) \leq \E [ \log(\alpha_1 Z_0^2 + \beta_1)] < 0$. 
Likewise, the ARCH$(1)$ process ($\beta_1 = 0$) then requires $\E [ \log ( \alpha_1 Z_0^2 ) ] < 0$, which is the same as $\alpha < 2e^\epsilon \approx 3.56$ with $Z_0$ being Gaussian.
Thus, the stationary condition is much weaker than the second-order stationary condition in which we require $\alpha_1 + \beta_1 < 1$.

The statistical inference leads to further nontrivial problems since the exact distribution of $(Z_t)$ remains unspecified, and so one usually determines the likelihoods under the hypothesis of standard Gaussian innovations. 
Moreover, the volatility $\left(\sigma_t\right)$ is an unobserved quantity approximated by mimicking the recursion \eqref{EQ_GARCH} with an initialization, for instance $X_{-p+1} = \cdots = X_0 = 0$ and $\sigma_{-q+1}^2 = \cdots = \sigma_0^2 = 0$. 
\cite{berkes2003} showed under minimal assumptions that the QMLE is strongly consistent and asymptotically normal.

Furthermore, under Assumption \ref{assump:cond_exists_solution}-\ref{assump:cond_comp_inde}, we have asymptotic local strong convexity of the QL function in GARCH$(p,q)$ models by Theorem \ref{THM_Convex}. 
However, the number of observations needed to guarantee local strong convexity vary. 
This can easily be seen by looking at the simplest case, namely when $(X_t)$ follows an ARCH$(1)$ process with parameter vector $\theta = (\omega, \alpha_1)^T$. 
The volatility process $\sigma^2_t (\theta)$ is given as $\omega+ \alpha_1 X_{t-1}^2$.
The eigenvalues of $\nabla_{\theta}^{2} l_t (\theta)$ are given by $\lambda_{t} = (\lambda_{t,1}, \lambda_{t,2}) = (0, \lambda_{t,2})$ with $\lambda_{t,2} = (1+X_{t-1}^4)(2 X_t^2 - \sigma_t^2(\theta)) 2^{-1}\sigma_{t}^{-6}(\theta)$.
Thus, the non-negativity of $\lambda_{t,2}$ would ensure convexity at time $t$ in our QML procedure. 
However, the probability of having convexity at each $t$ is unlikely as $\Prob ( \cap_{t=1}^n \nabla_{\theta}^{2} l_t (\theta) \geq 0 ) = \Prob ( \cap_{t=1}^n Z^2_t \geq 1/2) = \Prob ( Z^2_0 \geq 1/2 )^n$ is approximately $0.52^n$ with i.i.d. Gaussian innovations $(Z_t)$, i.e., $(Z_{t}^{2})$ is $\chi^{2}$-distributed with $1$ degree of freedom.
On the other hand, increasing the number of observations used at each iteration would increase the probability of having local strong convexity.

%%%%%%%%%%%%%%%%%%%%%%%%%%%%%%%%%%%%%%%%%%%%%%%%%%%%
%%%%%%%%%%%%%%%%%%%%%%%%%%%%%%%%%%%%%%%%%%%%%%%%%%%%

\section{Adaptive Recursive QML Estimation} \label{sec:adavol}

%%%%%%%%%%%%%%%%%%%%%%%%%%%%%%%%%%%%%%%%%%%%%%%%%%%%
%%%%%%%%%%%%%%%%%%%%%%%%%%%%%%%%%%%%%%%%%%%%%%%%%%%%

Our recursive QML method relies on stochastic approximations introduced by \cite{robbins1951}, which only requires the previous parameter estimate to update the parameter estimate using the new observation. 
We perform the first-order stochastic gradient method defined as
\begin{align} \label{eq:sgd}
\widehat{\theta}_{t} = \widehat{\theta}_{t-1} - \eta_{t-1} \nabla_{\theta} \widehat{l}_t (\widehat{\theta}_{t-1}),
\end{align}
where $\eta_{t-1}>0$ is the step-size at the $t-1$ step, and $\nabla_{\theta} \widehat{l}_t (\widehat{\theta}_{t-1})$ is the gradient using the $X_t$ observation and the QMLE estimate $\widehat{\theta}_{t-1}$.
This method is computationally efficient as it only requires a cost of $\bigO(d)$ per recursion.
Depending on the number of observations, we have a trade-off between the accuracy of the recursive QML estimates and the time it takes to perform a parameter update (\cite{bottou2007}).

According to \cite{robbins1951}, we must schedule the step-size such that $\sum_{t=1}^{\infty} \eta_{t}=\infty$ and $\sum_{t=1}^{\infty} \eta_{t}^{2} < \infty$, but these bounds do not make the choice of an appropriate step-size $\eta_{t}$ easier in practice. 
A more suitable approach is an adaptive learning rate, which updates the step-size in \eqref{eq:sgd} on the fly pursuant to the gradient $\nabla_{\theta} \widehat{l}_t(\cdot)$. 
Thus, our choice of step-size $\eta_{t}$ have less impact on performance, making convergence more robust and lower the demand for manually fine-tuning.
Such an approach is often used in settings of streaming data as generic methods are preferred.
Adaptive and separate learning rates for each parameter was proposed by \cite{duchi2011} in their AdaGrad procedure. 
A different learning rate speeds up convergence in situations where the appropriate learning rates vary across parameters. 
Other well-known examples of adaptive learning rates could be AdaDelta by \cite{zeiler2012}, RMSProp by \cite{tieleman2012} and ADAM by \cite{kingma2015}. 
As we may expect a lack of convexity, we select the AdaGrad algorithm since it has shown promising results in non-convex optimization (\cite{ward2018}). 
The AdaGrad procedure is given by the updates
\begin{align}  \label{eq:sgd_adagrad}
\widehat{\theta}_{t} = \widehat{\theta}_{t-1} - \frac{\eta}{\sqrt{ \sum_{i=1}^{t} \nabla_{\theta} \widehat{l}_{i} (\widehat{\theta}_{i-1})^2 + \eps}}  \nabla_{\theta} \widehat{l}_t (\widehat{\theta}_{t-1}),
\end{align}
where $\eta>0$ is a constant learning rate and $\eps>0$ a small number ensuring positivity.
Good default values are $\eta = 0.1$ and $\eps = 10^{-8}$, see the AdaVol algorithm in Table \ref{algo:online_garchpq}.
Note $\nabla_{\theta} \widehat{l}_{i} (\widehat{\theta}_{i-1})^2$ denotes the element-wise square $\nabla_{\theta} \widehat{l}_{i} (\widehat{\theta}_{i-1}) \odot \nabla_{\theta} \widehat{l}_{i} (\widehat{\theta}_{i-1})$.

As the QL loss is defined only for $\widehat{\theta}_n \in \K$, we will require that the recursive algorithm always takes values in $\K$. 
\cite{zinkevich2003} suggests we project our approximation $\widehat{\theta}_n$ onto $\K$, preventing large jumps and enforcing the convergence of our stochastic gradient method. 
By implementing this projection on \eqref{eq:sgd_adagrad}, we have our method for updating estimates, namely
\begin{align} \label{eq:sgd_adagrad_proj}
\widehat{\theta}_{t} = \text{P}_{\K} \left[ \widehat{\theta}_{t-1} -  \frac{\eta}{\sqrt{\sum_{i=1}^{t} \nabla_{\theta} \widehat{l}_{i} (\widehat{\theta}_{i-1})^2   + \eps}} \nabla_{\theta} \widehat{l}_t (\widehat{\theta}_{t-1}) \right].
\end{align}

%%%%%%%%%%%%%%%%%%%%%%%%%%%%%%%%%%%%%%%%%%%%%%%%%%%%
%%%%%%%%%%%%%%%%%%%%%%%%%%%%%%%%%%%%%%%%%%%%%%%%%%%%

\subsection{Adaptive Recursive QML Estimation for GARCH Models} \label{sec:rec_qml_garch}

%%%%%%%%%%%%%%%%%%%%%%%%%%%%%%%%%%%%%%%%%%%%%%%%%%%%
%%%%%%%%%%%%%%%%%%%%%%%%%%%%%%%%%%%%%%%%%%%%%%%%%%%%

The GARCH process $(X_t)$ parameters can be numerically challenging to estimate in empirical applications. 
The numerical optimization algorithms can quickly fail or converge to irregular solutions (\cite{zumbach2000pitfalls}). 
Therefore, examining the approximative QMLE $\widehat{\theta}_n^*$ must be made with a healthy amount of skepticism.
A well-discussed problem for the GARCH$(p,q)$ models is that the QMLE performs poorly for numerically small (but still positive) values of $\omega$.
The parameter $\omega$ is vital and often tricky to estimate.
Stabilizing the estimation of $\omega$ would not only improve the $\omega$ estimate but also have a positive impact on the other model parameters.

On way to overcome small values of $\omega$ for the GARCH$(p,q)$ model is by scaling $(X_t)$ with some factor $\lambda > 0$ as we have homogeneity; let $(X_t)$ follow a GARCH$(p, q)$ process with parameter vector $\theta = (\omega, \alpha_1, \dots, \alpha_p, \beta_1, \dots, \beta_q)^T$ and innovations $(Z_t)$. Then for any $\lambda > 0$, the process $(\sqrt{\lambda}X_t)$  is a GARCH$(p,q)$ process with parameter vector $\theta = (\lambda\omega, \alpha_1, \dots, \alpha_p, \beta_1, \dots, \beta_q)^T$ and identical innovations $(Z_t)$.

However, we wish to avoid this form of inference in our recursive algorithm as one then needs to come up with a scaling parameter that has to be estimated beforehand. 
Instead, we circumvent this issue by introducing a concept called Variance Targeting Estimation (VTE) (\cite{francq2011}). 
We apply VTE for estimating $\omega$ by use of $\gamma^2$, which is the unconditional variance estimated by the sample variance (as seen in \eqref{eq:omega_reparam}). 
Thus we have a two-step estimator where we estimate the sample variance $\gamma^2$ recursively, and the remaining parameters $\theta = (\alpha_1, \dots , \alpha_p, \beta_1 , \dots, \beta_q )^T$ are estimated by the QML method. 
Pseudo-code of the AdaVol algorithm is presented in Table \ref{algo:online_garchpq}.
The reparametrization is obtained by defining
\begin{align} \label{eq:omega_reparam}
\omega = \gamma^2 \left( 1 -  \sum_{i=1}^p \alpha_i  - \sum_{j=1}^q \beta_j  \right).
\end{align}
The volatility process in the GARCH$(p,q)$ process can then be rewritten as
\begin{align} \label{eq:garch_reparam}
(\sigma^2_t - \gamma^2) = \sum_{i=1}^p \alpha_i  (X_{t-i}^2-\gamma^2) + \sum_{j=1}^q \beta_j (\sigma^2_{t-j}-\gamma^2).
\end{align}
Similarly, one can define an ARCH$(p)$ process by setting $\beta_j=0$ for $1 \leq j \leq q$.
The GARCH$(p,q)$ process $(X_t)$ in \eqref{eq:garch_reparam} has similar QL losses as before except $\nabla_{\theta} \widehat{\sigma}_t^2 (\theta)$ in \eqref{GARCH_QLIK_D} and \eqref{GARCH_QLIK_DD}, where $\vartheta_t (\theta)$ is given as $( X_{t-1}^2 - \gamma^2, \dots, X_{t-p}^2 - \gamma^2, \widehat{\sigma}_{t-1}^2(\theta)- \gamma^2, \dots, \widehat{\sigma}_{t-q}^2 (\theta) - \gamma^2 )^T \in \R^{p+q}$ and the parameter space is defined by $\K= \left\{(\alpha_1,\ldots,\alpha_p,\beta_1,\ldots,\beta_q)\in\R_+^{p+q} \middle| \sum_{i=1}^p \alpha_i  + \sum_{j=1}^q \beta_j < 1 \right\}$.

\begin{table}[h!]
%{\LinesNumberedHidden
\removelatexerror
\begin{algorithm}[H]%[h!]
\SetAlgoRefName{}
\SetAlgoLined
\DontPrintSemicolon
\KwData{
$(X_t)_{t \geq 1}$ (observations) \;
}
%\BlankLine
\SetKwInOut{Input}{input}\SetKwInOut{Output}{output}
\Input{
$\widehat{\theta}_0$ (initial parameter vector), $\eta=0.1$, $\eps = 10^{-8}$ \;
}
%\BlankLine
\Begin{
initialize: $\widehat{\sigma}_{1}^{2} = X_{1}^{2}$, $\widehat{\mu}_{0} = 0$, $\widehat{\gamma}_{0}^{2} = 0$, $\widehat{G}_{0} = \eps$ and $t=0$ \;
%\For{each $X_t$}{

\While{$\widehat{\theta}_{t}$ \KwSty{not converged}}{
% update time
$t = t+1$ \;
% update sample mean
$\widehat{\mu}_{t} = t(t+1)^{-1} \widehat{\mu}_{t-1} + (t+1)^{-1} X_t$ %(sample mean)
\;
% update sample variance
$\widehat{\gamma}^2_{t} = (t-1)t^{-1} \widehat{\gamma}^2_{t-1} + t^{-1} \left(X_t - \widehat{\mu}_{t}\right)^{2}$ %(sample variance)
\;
% compute gradient
$\widehat{g}_{t} = \nabla_{\theta} \widehat{l}_{t} (\widehat{\theta}_{t-1}) $ %(compute gradient) 
\;
% accumulated squared gradients
$\widehat{G}_{t} = \widehat{G}_{t-1} + \widehat{g}_{t}^{2}$ %(accumulate squared gradients) 
\;
% compute and project parameters
$\widehat{\theta}_{t} = \text{P}_{\K} \left[ \widehat{\theta}_{t-1} - \eta \widehat{G}_{t}^{-1/2} \widehat{g}_{t} \right]$ %(compute and project estimates) 
\;
% predict variance
$\widehat{\sigma}_{t+1}^2 = \widehat{\gamma}^2_{t} +\sum_{i=1}^{p} \widehat{\alpha}_{i}^{(t)} (X_{t-i}^{2} - \widehat{\gamma}^2_{t}) + \sum_{j=1}^{q} \widehat{\beta}_{j}^{(t)} (\widehat{\sigma}_{t-j}^2-\widehat{\gamma}^2_{t})$
}}
% resulting estimates
\KwResult{$\widehat{\theta}_{t}$ (resulting estimates), $\widehat{\sigma}_{t+1}^{2}$ (predicted volatility)}
\SetAlgorithmName{AdaVol}{}{}
\caption{Adaptive recursive QML estimation for GARCH$(p,q)$ models using the technique of VTE.}
%\label{algo:online_garchpq}
\end{algorithm}%}
\caption{Pseudo-code of the AdaVol algorithm.}
\label{algo:online_garchpq}
\end{table}

The VTE is not a requirement for the recursive method, but it provides additional speed and numerical stability.
Namely, the VTE ensures a consistent estimate of the long-run variance, even if the model is misspecified. 
Additionally, presuming $\gamma$ is well estimated, we reduce the parameter space dimension and increase the speed of convergence of the recursive optimization routines. 
Moreover, the geometry of the new set of optimization $\K$ allows the projection step in \eqref{eq:sgd_adagrad_proj} to be efficiently implemented following \cite{duchi2008}.

One should be aware that the VTE requires stronger assumptions for the existence of the variance and is likely to suffer from efficiency loss. 
\cite{francq2011} also showed that the VTE would never be asymptotically more accurate than the QMLE. 
Another drawback of using the VTE is the need for a finite fourth moment of the process $(X_t)$. 
Meaning, one would need $\alpha_1<0.57$ for an ARCH$(1)$ model using standard Gaussian noise as $EX_t^4 < \infty$ if and only if $\alpha_1^2 + (EZ_0^4-1)\alpha_1^2 < 1$. 
For a GARCH$(1,1)$ model, we should have $(\alpha_1+\beta_1)^2 + (EZ_0^4-1)\alpha_1^2 < 1$. 
These parameter bounds restrict the usefulness and range of applications for the VTE techniques.
Fortunately, these constraints solely concern the batch setting.

%%%%%%%%%%%%%%%%%%%%%%%%%%%%%%%%%%%%%%%%%%%%%%%%%%%%
%%%%%%%%%%%%%%%%%%%%%%%%%%%%%%%%%%%%%%%%%%%%%%%%%%%%

\section{Applications} \label{sec:applications}

%%%%%%%%%%%%%%%%%%%%%%%%%%%%%%%%%%%%%%%%%%%%%%%%%%%%
%%%%%%%%%%%%%%%%%%%%%%%%%%%%%%%%%%%%%%%%%%%%%%%%%%%%

In this section, we examine the AdaVol algorithm on simulated and real-life observations. 
Our implementation of AdaVol is provided in a repository at \cite{werge2019}, and a relative speed comparison can be found in \ref{sec:appendix:speed}.
We compare our approach to the Iterative QMLE (IQMLE) approximation $\widetilde{\theta}_n$, which is estimated at every two thousand increments using all observations up to this point, i.e., $(\widetilde{\theta}_t)_{(k-2000)+1 \leq t \leq k}$ is estimated using $(X_t)_{1 \leq t \leq k}$ for $k=2000,4000, \dots, n$.
In this way, we illuminate the large-scale learning trade-off of applying our recursive method instead of the iterative method, which is forward-looking with up to two thousand observations (\cite{bottou2007}).
As suggested by \cite{ip2006}, we use the (bounded) \emph{L-BFGS} algorithm to solve the nonlinear optimization problem in \eqref{GARCH_QLIK} for $\widetilde{\theta}_n$ with initial guess $\widetilde{\theta}_0 \in \K$.
Our recursive QMLE approximation $\widehat{\theta}_n$ is produced by the AdaVol algorithm (described in Table \ref{algo:online_garchpq}).
It takes our initial value $\widehat{\theta}_0 \in \K$, learning rate $\eta=0.1$ and $\epsilon=10^{-8}$ as input. 
At last, for a fair comparison, we always use the same initial guess for both methods, namely $\widehat{\theta}_0 = \widetilde{\theta}_0 \in \K$.

It is possible to customize AdaVol by tuning the learning parameter $\eta$, e.g., by choosing the best performing learning rate evaluated on the first part of the observations. 
We use a fixed learning rate $\eta=0.1$ across all applications (simulated and real-life observations) to avoid the learning rate's potential influence in our experiments.
However, one should be aware of the versatility achieved with different learning rate choices.
The choice of learning rates is cumbersome, as an excessive learning rate can cause the algorithm to deviate from the true parameter estimate.
In contrast, a learning rate that is too small can lead to slow convergence.
Nevertheless, a small learning rate may be preferred if one only wants to keep track of minor parameter estimation changes.

%%%%%%%%%%%%%%%%%%%%%%%%%%%%%%%%%%%%%%%%%%%%%%%%%%%%
%%%%%%%%%%%%%%%%%%%%%%%%%%%%%%%%%%%%%%%%%%%%%%%%%%%%

\subsection{Simulations} \label{sec:simulations}

%%%%%%%%%%%%%%%%%%%%%%%%%%%%%%%%%%%%%%%%%%%%%%%%%%%%
%%%%%%%%%%%%%%%%%%%%%%%%%%%%%%%%%%%%%%%%%%%%%%%%%%%%

All simulations are performed by the use of twenty thousand observations $(n=20000)$, and the simulated data $(X_t)$ is always generated using Gaussian innovations with zero mean and unit variance.
To avoid possible bias due to the choice of the true parameter vector $\theta_0$ and initial values $\widehat{\theta}_0,\widetilde{\theta}_0$, we conduct our experiments using random parameter vectors $\theta_0 \in \K$ and random initial guesses $\widehat{\theta}_0,\widetilde{\theta}_0 \in \K$.
These parameter vectors are drawn randomly from our parameter space $\K$.
The $\omega$ parameter is generated by taking a positive number from a uniform distribution, and then we multiply it with $10^{-\tau}$, where $\tau$ is some random positive integer up to eight.
In this way, we cover a broad parameter domain while having parameter values close to the boundary.
Similarly, the $(\alpha_i)_{1 \leq i \leq p}$ and $(\beta_j)_{1 \leq j \leq q}$ parameters is generated from a uniform distribution with the condition of having $\sum_{i=1}^p \alpha_i+\sum_{j=1}^{q} \beta_j < 1$. 
Note that the initial guesses $\widehat{\theta}_0$ and $\widetilde{\theta}_0$ are generated the same way.
Thus, when we mention random parameters for the rest of the paper, we refer to this generation procedure.

%%%%%%%%%%%%%%%%%%%%%%%%%%%%%%%%%%%%%%%%%%%%%%%%%%%%
%%%%%%%%%%%%%%%%%%%%%%%%%%%%%%%%%%%%%%%%%%%%%%%%%%%%

\subsubsection{ARCH Models}

%%%%%%%%%%%%%%%%%%%%%%%%%%%%%%%%%%%%%%%%%%%%%%%%%%%%
%%%%%%%%%%%%%%%%%%%%%%%%%%%%%%%%%%%%%%%%%%%%%%%%%%%%

As discussed earlier, the iterative QMLE approximation $\widetilde{\theta}_n$ performs poorly for numerically small $\omega>0$ values, which are often encountered in financial time series.
Before moving on to the case of small $\omega$ parameter values, we have in Figure \ref{fig:plt_arch_trajectory_large_omega} the trajectories of both QMLE approximations using an ARCH$(1)$ process with true parameter vector and initial values given by
\begin{align} \label{sim_arch_theta0_omega}
\theta_0 =
 \begin{pmatrix} \omega \\ \alpha_1 \end{pmatrix}
 =
 \begin{pmatrix} 2.0 \\ 0.6 \end{pmatrix}
\text{ and }
\widehat{\theta}_0 
= \widetilde{\theta}_0 
= \begin{pmatrix} 1.5 \\ 0.4 \end{pmatrix}.
\end{align}

\begin{figure}[h!]
\centering
\includegraphics[width=1.0\linewidth]{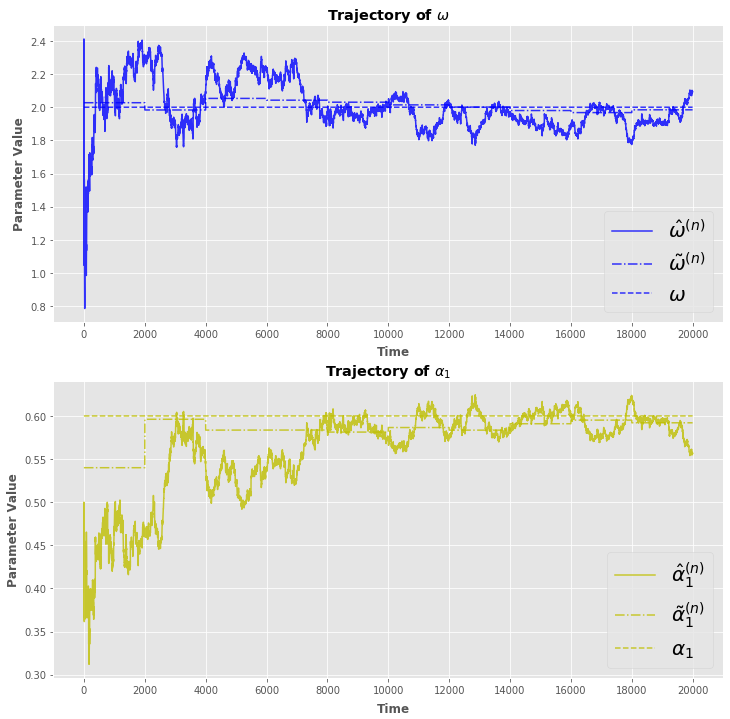}
\caption{Trajectory of $\widehat{\theta}_n$ (solid line) and $\widetilde{\theta}_n$ (semi-dotted line) for an ARCH$(1)$ process with true parameter vector (dotted line) and initial guess given in \eqref{sim_arch_theta0_omega}.}
\label{fig:plt_arch_trajectory_large_omega}
\end{figure}

Figure \ref{fig:plt_arch_trajectory_large_omega} shows a very reasonable convergence of both estimators, $\widehat{\theta}_n = (\widehat{\omega}^{(n)}, \widehat{\alpha}_{1}^{(n)})^T$ and $\widetilde{\theta}_n = (\widetilde{\omega}^{(n)}, \widetilde{\alpha}_{1}^{(n)})^T$, when the true parameter $\omega = 2.0$. 
Not surprisingly, our method experiences some fluctuations initially, but as the learning rate decreases, the fluctuation likewise evaporates, and within the first few thousand observations, we hit the true parameter values.

Likewise, in Figure \ref{fig:plt_arch_trajectory}, we have the QMLE approximations' trajectories for an ARCH$(1)$ process, but now with true parameter vector and initial guess given as
\begin{align} \label{sim_arch_theta0_1}
\theta_0 =
 \begin{pmatrix} 1 \cdot 10^{-8} \\ 0.6 \end{pmatrix}
 \text{ and }
\widehat{\theta}_0 
= \widetilde{\theta}_0 
= \begin{pmatrix} 5 \cdot 10^{-8} \\ 0.4 \end{pmatrix}.
\end{align}
Figure \ref{fig:plt_arch_trajectory} indicates a modest convergence of $\widehat{\theta}_n$ but shows slow convergence of $\widetilde{\alpha}_n$ towards the true $\alpha_1$ parameter.
In addition, $\widetilde{\alpha}_n$ seems biased concerning the initial value $\widetilde{\alpha}_0 = 0.4$ as it processes almost half of the observations before moving closer to the true $\alpha_1 = 0.6$.

\begin{figure}[h!]
\centering
\includegraphics[width=1.0\linewidth]{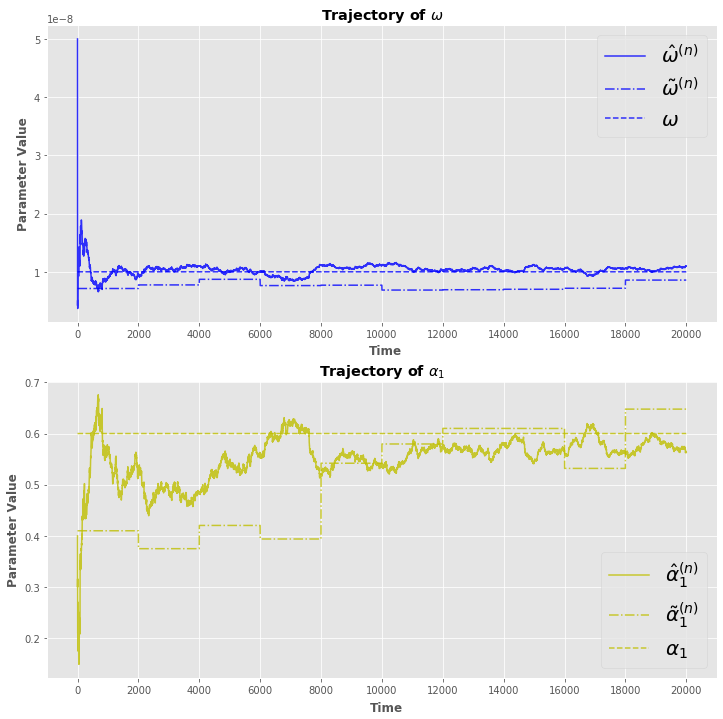}
\caption{Trajectory of $\widehat{\theta}_n$ (solid line) and $\widetilde{\theta}_n$ (semi-dotted line) for an ARCH$(1)$ process with true parameter vector (dotted line) and initial guess given in \eqref{sim_arch_theta0_1}.}
\label{fig:plt_arch_trajectory}
\end{figure}

A way of demonstrating the variation of $\widehat{\theta}_n$ and $\widetilde{\theta}_n$ performance for small $\omega$ values is presented in Figure \ref{fig:plt_arch_trajectory_mc_online} and Figure \ref{fig:plt_arch_trajectory_mc_offline}, where we have the average trajectory of one hundred trajectories with their corresponding boxplots showing the distribution of these one hundred trajectories.

\begin{figure}[h!]
\centering
\includegraphics[width=1.0\linewidth]{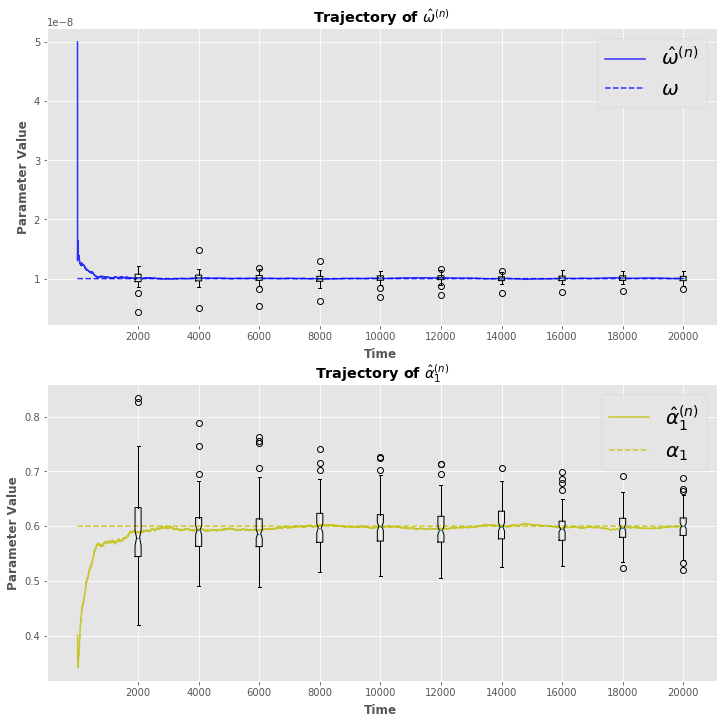}
\caption{Average trajectory (solid line) of one hundred $\widehat{\theta}_n$'s for an ARCH$(1)$ process with true parameter vector (dotted line) and initial guess from \eqref{sim_arch_theta0_1}. The boxplots shows the distribution of the one hundred trajectories.}
\label{fig:plt_arch_trajectory_mc_online}
\end{figure}

\begin{figure}[h!]
\centering
\includegraphics[width=1.0\linewidth]{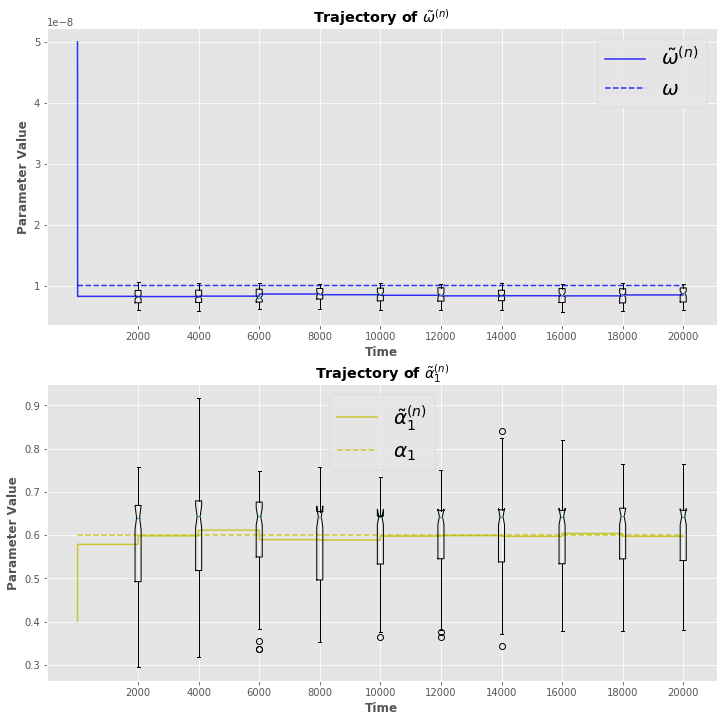}
\caption{Average trajectory (solid line) of one hundred $\widetilde{\theta}_n$'s for an ARCH$(1)$ process with true parameter vector (dotted line) and initial guess from \eqref{sim_arch_theta0_1}. The boxplots shows the distribution of the one hundred trajectories.}
\label{fig:plt_arch_trajectory_mc_offline}
\end{figure}

Here, in Figure \ref{fig:plt_arch_trajectory_mc_online}, we can see that AdaVol converges to the true parameter values with low sensitivity to the choice of initial values.
Moreover, this convergence occurs within the first few thousand observations.
However, in Figure \ref{fig:plt_arch_trajectory_mc_offline}, we see the opposite in which $\widetilde{\theta}_n$ has convergence issues; it is consistently underestimating the $\omega$ parameter.
Furthermore, the $\alpha_1$ parameter range does not appear to be decreasing over time, and the range seems larger than AdaVol's.

As we observe the true volatility process $(\sigma_t)$ in this section, we can evaluate the predicted volatility processes' accuracy.
We do this using the Mean Percentage Errors (MPE) given as
\begin{align} \label{eq:acc_score_mpe}
\widehat{\sigma}_{\text{MPE}} = \frac{1}{n} \sum_{t=1}^{n} \frac{\sigma_{t} - \widehat{\sigma}_{t}}{\sigma_{t}}
\text{ and }
\widetilde{\sigma}_{\text{MPE}} = \frac{1}{n} \sum_{t=1}^{n} \frac{\sigma_{t} - \widetilde{\sigma}_{t}}{\sigma_{t}},
\end{align}
and the Mean Absolute Percentage Errors (MAPE) given by
\begin{align} \label{eq:acc_score_mape}
\widehat{\sigma}_{\text{MAPE}} = \frac{1}{n} \sum_{t=1}^{n} \frac{\vert \sigma_{t} - \widehat{\sigma}_{t} \vert}{\sigma_{t}}
\text{ and }
\widetilde{\sigma}_{\text{MAPE}} = \frac{1}{n} \sum_{t=1}^{n} \frac{\vert \sigma_{t} - \widetilde{\sigma}_{t} \vert}{\sigma_{t}},
\end{align}
where $(\widehat{\sigma}_{t})$ is coming from AdaVol and $(\widetilde{\sigma}_{t})$ from the IQMLE approximation.
Note that $\widetilde{\sigma}_{t}$'s estimation is the same as for the IQMLE approximation $\widetilde{\theta}_{t}$, i.e., $(\widetilde{\sigma}_{t})_{(k-2000)+1 \leq t \leq k}$ is estimated using $(X_t)_{1 \leq t \leq k}$ for $k=2000,4000, \dots, n$.

In the rest of this section, we will use random parameters to generalize our studies, limiting the potential bias from having fixed parameters (See Section \ref{sec:simulations}).
Our routine is as follows: We draw a random true parameter vector $\theta_{0} \in \K$ from which we generate our observations $(X_{t})$.
Based on these observations $(X_{t})$, we calculate our estimates using (a random) $\widehat{\theta}_{0} = \widetilde{\theta}_{0} \in \K$.
Then, we evaluate our estimates using an accuracy score, e.g., MPE and MAPE. 
Finally, we repeat all these steps the desired number of times.
Boxplots of one hundred accuracy scores, MPE in \eqref{eq:acc_score_mpe} and MAPE in \eqref{eq:acc_score_mape}, can be found in Figure \ref{fig:plt_arch_boxplot_mes_rtrue_rinit}.
In the top graph of Figure \ref{fig:plt_arch_boxplot_mes_rtrue_rinit}, one can observe the MPE (for both methods) is symmetric around zero, but $\widetilde{\sigma}_{\text{MPE}}$ has a negative tail, meaning the iterative method may overestimate the volatility in some cases.
Also, the spread of $\widetilde{\sigma}_{\text{MPE}}$ is higher than the $\widehat{\sigma}_{\text{MPE}}$, which is clearly seen by looking at $\widetilde{\sigma}_{\text{MAPE}}$ in the bottom graph of Figure \ref{fig:plt_arch_boxplot_mes_rtrue_rinit}.

\begin{figure}[h!]
\centering
\includegraphics[width=1.0\linewidth]{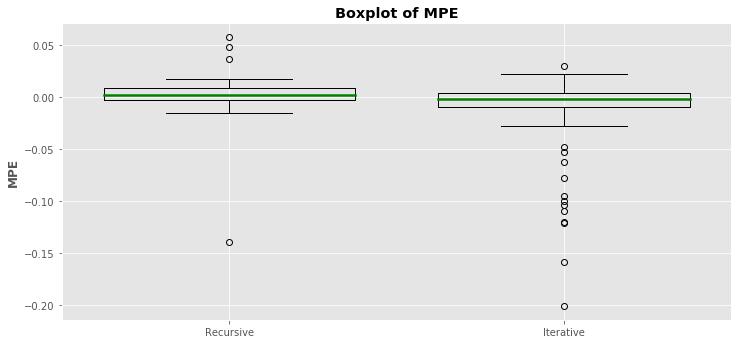}
\includegraphics[width=1.0\linewidth]{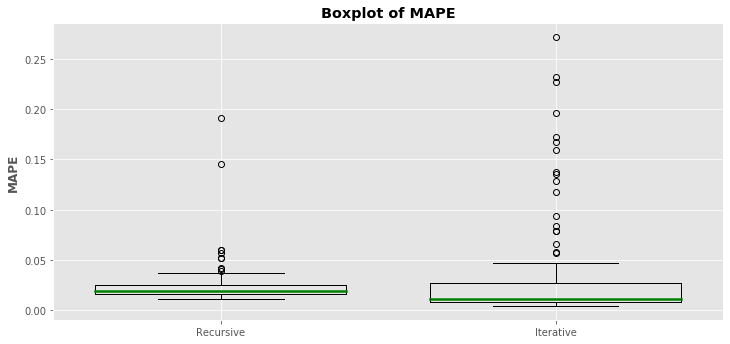}
\caption{Boxplots of one hundred accuracy scores MPE \eqref{eq:acc_score_mpe} and MAPE \eqref{eq:acc_score_mape} using an ARCH$(1)$ process with random true parameter vector and initial guess in $\K$.}
\label{fig:plt_arch_boxplot_mes_rtrue_rinit}
\end{figure}

Another way of measuring the accuracy can be made by studying the conditional quantiles using the recursive $(\widehat{\sigma}_{t})$ and iterative $(\widetilde{\sigma}_{t})$ predicted volatility processes (\cite{biau2011}).
Under the assumption of standard Gaussian innovations, $X_t$ is Gaussian with zero mean and variance $\sigma^2_t$. 
Thus, for any $\alpha \in (0,1)$, the $\alpha$-quantile of a Gaussian distribution $\NORM (0, \sigma^2_t)$ is $\sigma_{t} \Phi^{-1}(\alpha)$, where $\Phi^{-1}(\alpha)$ is the $\alpha$-quantile of the standard Gaussian distribution.
We use the so-called $\alpha$-quantile loss function proposed by \cite{koenker1978}: The $\alpha$-quantile loss function $\rho_{\alpha}$ using the volatility process $\sigma_{t}$ is defined as
\begin{align} \label{eq:quantile_loss_function}
\rho_{\alpha} (X_{t}, \sigma_{t}) = 
\begin{cases}
\alpha \left(X_{t} - \Phi^{-1}(\alpha) \sigma_{t}\right), &\text{for } X_{t} > \Phi^{-1}(\alpha) \sigma_{t}, \\
(1-\alpha) \left( \Phi^{-1}(\alpha) \sigma_{t} - X_{t}\right), &\text{for } X_{t} \leq \Phi^{-1}(\alpha) \sigma_{t},
\end{cases}
\end{align}
with tilting parameter $\alpha \in (0,1)$.
The idea behind the $\alpha$-quantile loss function is to penalize quantiles of low probability more for overestimation than for underestimation (and reversely for high probability quantiles). 
We evaluate across the $\alpha$-quantile scores $\rho_{\alpha}$ of $(\sigma_{t})$ by the (normalized) cumulative $\alpha$-quantile scoring function $QS_{\alpha}$:
\begin{align} \label{eq:cum_quantile_loss_function}
QS_{\alpha} (X_{n}, \sigma_{n}) = \frac{1}{n} \sum_{t=1}^{n} \sum_{m=1}^{M} \rho_{\alpha_m} (X_{t}, \sigma_{t}),
\end{align}
with $M$ as the number of quantiles $\alpha = \{\alpha_1, \dots, \alpha_M\}$.
The lowest $QS_{\alpha}$ score indicates the best ability of volatility forecast.
The findings of one hundred $QS_{\alpha} (X_{n}, \widehat{\sigma}_{n})$ and $QS_{\alpha} (X_{n}, \widetilde{\sigma}_{n})$ scores is presented in Figure \ref{fig:plt_arch_boxplot_ql_rtrue_rinit}, where we have used $\alpha = \{0.01, 0.02, \dots, 0.99\}$, a random true parameter vector and random initialization in $\K$. 
The $QS_{\alpha}$ scores in Figure \ref{fig:plt_arch_boxplot_ql_rtrue_rinit} are indistinguishable.
This indicates no loss of generality in using our recursive method even though our estimates are calculated only once, making them more adaptable over time.
Surprisingly, the iterative method is not superior, even when forward-looking (with up to two thousand observations).

\begin{figure}[h!]
\centering
\includegraphics[width=1.0\linewidth]{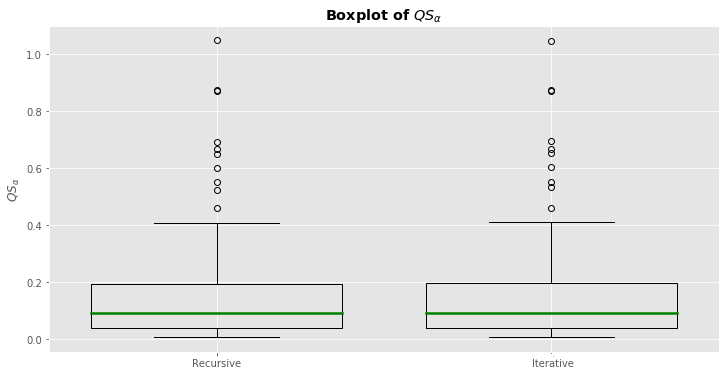}
\caption{Boxplots of one hundred $QS_{\alpha}$ scores with $\alpha = \{0.01, 0.02, \dots, 0.99\}$ using an ARCH$(1)$ model with random true parameter vector and initial value in $\K$.}
\label{fig:plt_arch_boxplot_ql_rtrue_rinit}
\end{figure}

%%%%%%%%%%%%%%%%%%%%%%%%%%%%%%%%%%%%%%%%%%%%%%%%%%%%
%%%%%%%%%%%%%%%%%%%%%%%%%%%%%%%%%%%%%%%%%%%%%%%%%%%%

\subsubsection{GARCH Models}

%%%%%%%%%%%%%%%%%%%%%%%%%%%%%%%%%%%%%%%%%%%%%%%%%%%%
%%%%%%%%%%%%%%%%%%%%%%%%%%%%%%%%%%%%%%%%%%%%%%%%%%%%

Figure \ref{fig:plt_garch_trajectory_mc_online} and \ref{fig:plt_garch_trajectory_mc_offline} shows the trajectories of the parameter estimates $\widehat{\theta}_{n} = (\widehat{\omega}^{(n)}, \widehat{\alpha}_{1}^{(n)}, \widehat{\beta}_{1}^{(n)})^T$ and $\widetilde{\theta}_{n} = (\widetilde{\omega}^{(n)}, \widetilde{\alpha}_{1}^{(n)}, \widetilde{\beta}_{1}^{(n)})^T$ for a GARCH$(1,1)$ model with the true parameter vector and initial guess given by
\begin{align} \label{sim_garch_theta0_1}
\theta_0 =
 \begin{pmatrix} \omega \\ \alpha_1 \\ \beta_1 \end{pmatrix}
 =
 \begin{pmatrix} 1 \cdot 10^{-8} \\ 0.2 \\ 0.7 \end{pmatrix}
 \text{ and }
\widehat{\theta}_0 
= \widetilde{\theta}_0 
= \begin{pmatrix} 5 \cdot 10^{-8} \\ 0.1 \\ 0.8 \end{pmatrix}.
\end{align}
As for the ARCH$(1)$ model, we observe a lower spread in the parameter trajectories coming from AdaVol $\widehat{\theta}_{n}$ than from the IQMLE approximation $\widetilde{\theta}_{n}$.
Moreover, the iterative $\widetilde{\theta}_{n}$ is consistently overestimating the $\beta_{1}$ parameter (and underestimating the $\alpha_{1}$ parameter), indicating a bias relative to the initial value.
It is worth mentioning that even if all initial values are in the stationary region, i.e., $\widehat{\theta}_{0} = \widetilde{\theta}_{0} = \theta_{0} \in \K$, we still have a proper amount of fluctuation in the parameter trajectories. 
As discussed before, this may partially be due to the volatility introduced by the gradient method and the flatness of the QL loss (\cite{zumbach2000pitfalls}).
Nevertheless, our recursive method possesses a remarkable convergence already after the first few thousand observations.

\begin{figure}[h!]
\centering
\includegraphics[width=1.0\linewidth]{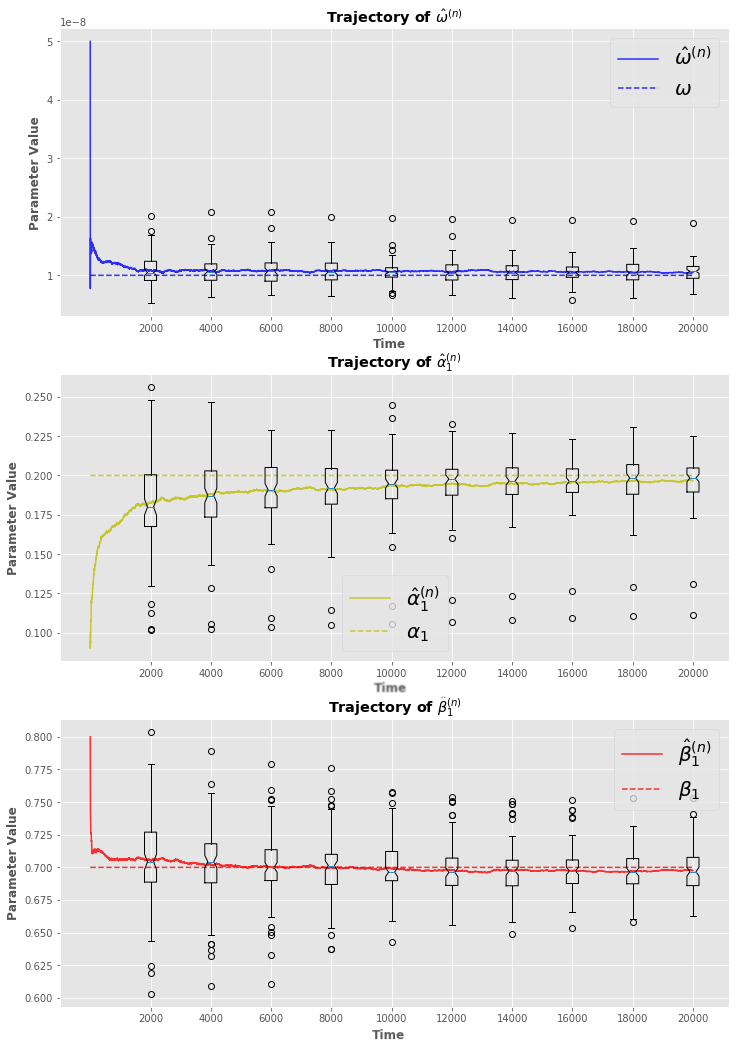}
\caption{Average trajectory (solid line) of one hundred $\widehat{\theta}_n$'s for a GARCH$(1,1)$ process with true parameter vector (dotted line) and initial guess given in \eqref{sim_garch_theta0_1}. The boxplots shows the distribution of the one hundred trajectories.}
\label{fig:plt_garch_trajectory_mc_online}
\end{figure}

\begin{figure}[h!]
\centering
\includegraphics[width=1.0\linewidth]{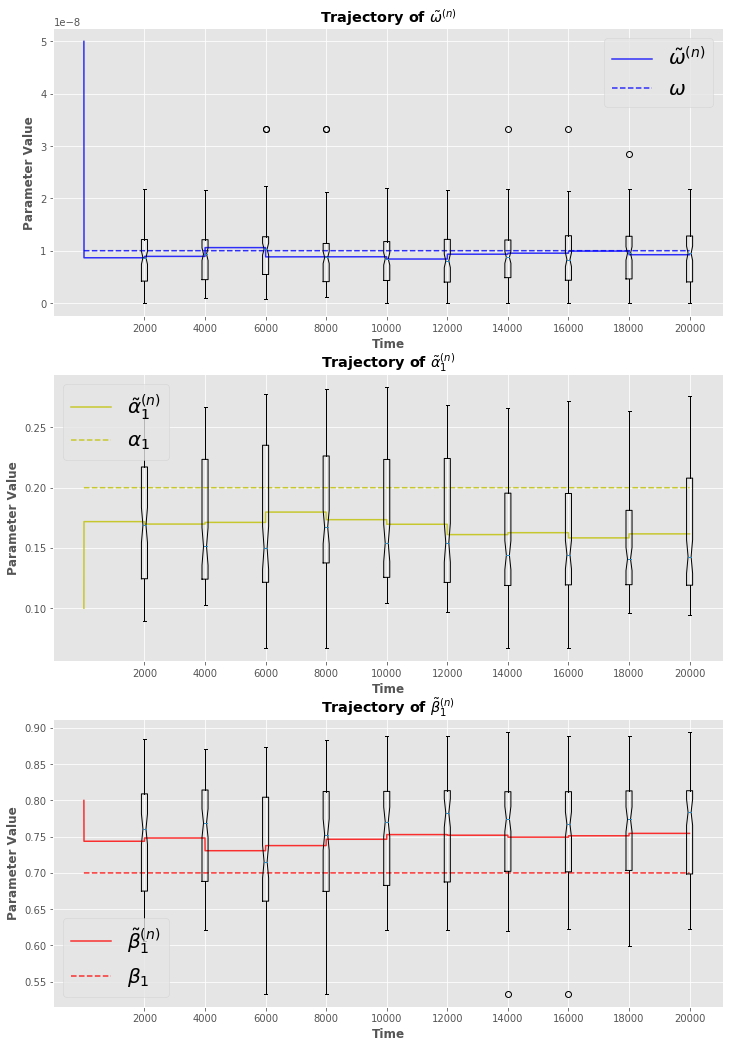}
\caption{Average trajectory (solid line) of one hundred $\widetilde{\theta}_n$'s for a GARCH$(1,1)$ process with true parameter vector (dotted line) and initial guess given in \eqref{sim_garch_theta0_1}. The boxplots shows the distribution of the one hundred trajectories.}
\label{fig:plt_garch_trajectory_mc_offline}
\end{figure}

The accuracy scores, namely MPE from \eqref{eq:acc_score_mpe} and MAPE from \eqref{eq:acc_score_mape}, can be found in Figure \ref{fig:plt_garch_boxplot_mes_rtrue_rinit} for the GARCH$(1,1)$ model using random true parameter vector and random initial values in $\K$.
By comparing our methods using random initializations, we circumvent the possible bias from the initial guess, which we observed in Figure \ref{fig:plt_garch_trajectory_mc_offline} for the iterative method.
As in the ARCH$(1)$ case, we obtain a lower spread for $\widehat{\sigma}_{\text{MPE}}$ than $\widetilde{\sigma}_{\text{MPE}}$.
Nevertheless, one should still expect some probability of ending up with an irregular solution where the AdaVol algorithm fails to converge.

\begin{figure}[h!]
\centering
\includegraphics[width=1.0\linewidth]{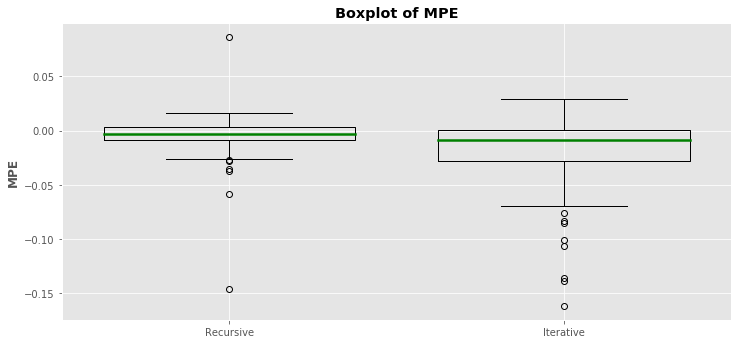}
\includegraphics[width=1.0\linewidth]{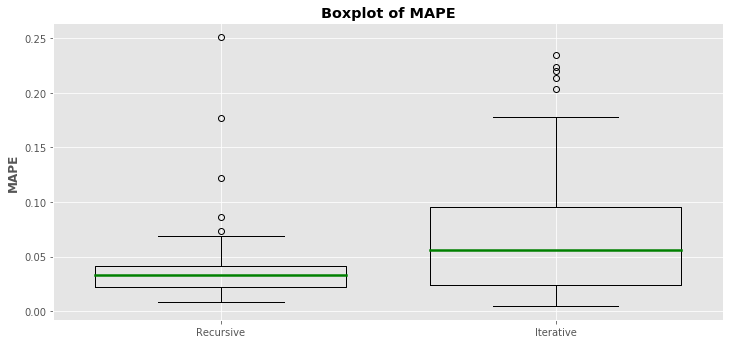}
\caption{Boxplots of one hundred accuracy scores MPE \eqref{eq:acc_score_mpe} and MAPE \eqref{eq:acc_score_mape} using a GARCH$(1,1)$ process with true parameter vector and random initial guess in $\K$.}
\label{fig:plt_garch_boxplot_mes_rtrue_rinit}
\end{figure}

Figure \ref{fig:plt_garch_boxplot_ql_rtrue_rinit} presents the results of one hundred $QS_{\alpha}$ scores with random true parameter vector and initial value in $\K$.
Again, the $QS_{\alpha}$ scores are indistinguishable (even when the iterative method is forward-looking).

\begin{figure}[h!]
\centering
\includegraphics[width=1.0\linewidth]{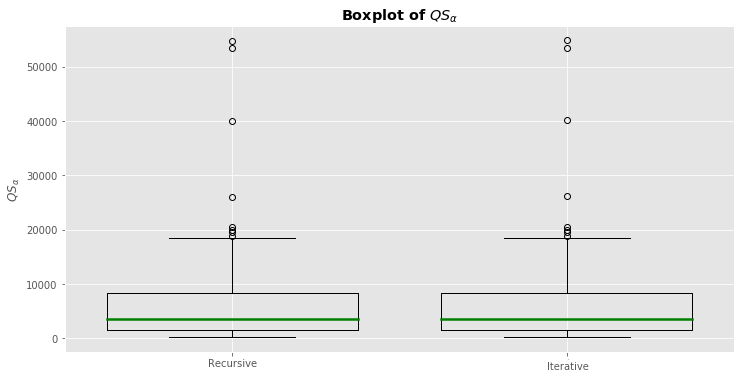}
\caption{Boxplots of one hundred $QS_{\alpha}$ scores with $\alpha = \{0.01, 0.02, \dots, 0.99\}$ using the GARCH$(1,1)$ model with random true parameter vector and initial value in $\K$.}
\label{fig:plt_garch_boxplot_ql_rtrue_rinit}
\end{figure}

%%%%%%%%%%%%%%%%%%%%%%%%%%%%%%%%%%%%%%%%%%%%%%%%%%%%
%%%%%%%%%%%%%%%%%%%%%%%%%%%%%%%%%%%%%%%%%%%%%%%%%%%%

\subsection{Real-life Observations}

%%%%%%%%%%%%%%%%%%%%%%%%%%%%%%%%%%%%%%%%%%%%%%%%%%%%
%%%%%%%%%%%%%%%%%%%%%%%%%%%%%%%%%%%%%%%%%%%%%%%%%%%%

We will now demonstrate AdaVol's abilities on real-life observations showing how our technique works in practice.
Table \ref{tab:idx_overview} shows an overview of the used stock market indices.
All empirical studies use the GARCH$(1,1)$ model, but higher-order parameters may yield a better fit for some stock market indices.
As the observation period spans over a long time, it is unlikely that the log-return series is stationary.
To exhibit AdaVol's ability to adapt to time-varying estimates, we begin by considering the S\&P500 Index in Section \ref{sec:sp500}. 
Afterward, in Section \ref{sec:idx}, we investigate the remaining six stock market indices presented in Table \ref{tab:idx_overview}, namely the CAC, DAX, DJIA, NDAQ, NKY, and RUT index.

\begin{table}[h!]
\begin{tabular}{llll} \hline
Stock Market Index & & Period \\ \hline
CAC 40 & (CAC) & March 1990 - Sep. 2020 \\
DAX 30 & (DAX) & Jan. 1988 - Sep. 2020 \\
Dow Jones & (DJIA) & Feb. 1985 - Sep. 2020 \\
NASDAQ Composite & (NDAQ) & Feb. 1971 - Sep. 2020 \\
Nikkei 225 & (NKY) & Jan. 1965 - Sep. 2020 \\
Russell 2000 & (RUT) & Nov. 1987 - Sep. 2020 \\ 
Standard \& Poor's 500 & (S\&P500) & Jan. 1950 - Sep. 2020 \\
\hline
\end{tabular}
\centering
\caption{Overview of considered stock market indices including their observation periods. The observations consist of daily log-returns which are defined as log differences of the closing prices of the index between two consecutive days.}
\label{tab:idx_overview}
\end{table}

%%%%%%%%%%%%%%%%%%%%%%%%%%%%%%%%%%%%%%%%%%%%%%%%%%%%
%%%%%%%%%%%%%%%%%%%%%%%%%%%%%%%%%%%%%%%%%%%%%%%%%%%%

\subsubsection{Application to the S\&P500 Index} \label{sec:sp500}

%%%%%%%%%%%%%%%%%%%%%%%%%%%%%%%%%%%%%%%%%%%%%%%%%%%%
%%%%%%%%%%%%%%%%%%%%%%%%%%%%%%%%%%%%%%%%%%%%%%%%%%%%

We apply our method on the S\&P500 Index from January 1950 to September 2020, consisting of $n=17672$ observations to test real-life data performance.
We employ the GARCH$(1,1)$ model with initial values:
\begin{align} \label{sp500_garch_thetahat0_1}
\widehat{\theta}_0 
= \widetilde{\theta}_0 
= \begin{pmatrix} 5 \cdot 10^{-5} \\ 0.05 \\ 0.9 \end{pmatrix}.
\end{align}
The QML trajectories can be seen in Figure \ref{fig:plt_sp500_garch}.
The produced AdaVol estimates $\widehat{\theta}_n = (\widehat{\omega}^{(n)}, \widehat{\alpha}_{1}^{(n)}, \widehat{\beta}_{1}^{(n)})^T$ experience some fluctuations initially, but as it vaporizes, it is clear that our estimates change over time. 
Most remarkable are the shifts our estimates make around some historical market crashes, e.g., Black Monday, the financial crisis, and COVID-19.
The instant shift in our estimates is an appealing property for detecting structural breaks.
It is noteworthy that the estimates of the IQMLE approximation $\widetilde{\theta}_n = (\widetilde{\omega}^{(n)}, \widetilde{\alpha}_{1}^{(n)}, \widetilde{\beta}_{1}^{(n)})^T$ are predominantly constant over time with minor changes except for some years between $1990$ and $2000$, where we detect a shift to lower $\widetilde{\beta}_{1}^{(n)}$ values and higher $\widetilde{\omega}^{(n)}$ values.

\begin{figure}[h!]
\centering
\includegraphics[width=1.0\linewidth]{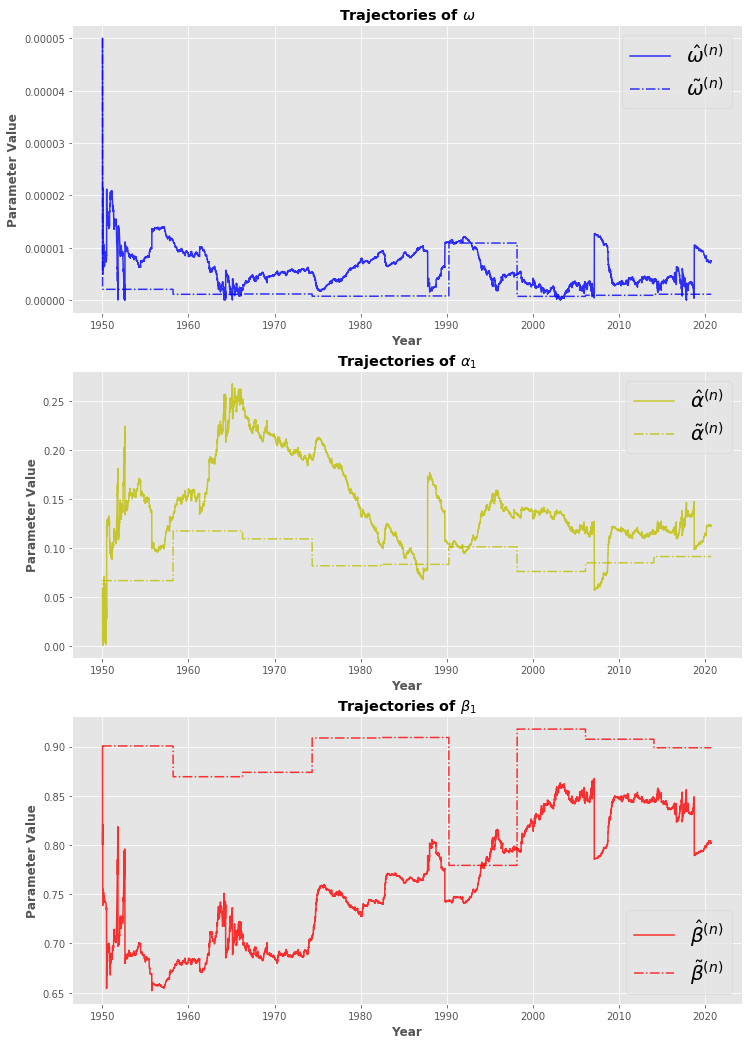}
\caption{Trajectory of the recursive $\widehat{\theta}_n$ (solid line) and iterative $\widetilde{\theta}_n$ (semi-dotted line) QML estimate using a GARCH$(1,1)$ model on S\&P500 Index log-returns from year 1950 to 2020. Both methods use initial value given in \eqref{sp500_garch_thetahat0_1}.}
\label{fig:plt_sp500_garch}
\end{figure}

In Figure \ref{fig:plt_sp500_return_b}, we have the log-returns $r_t$ of the S\&P500 Index, and the confidence intervals $\bar{r}\pm1.96\widehat{\sigma}_{t}$ and $\bar{r}\pm1.96\widetilde{\sigma}_{t}$ using the recursive $\widehat{\sigma}_{t}$ and iterative $\widetilde{\sigma}_{t}$ predicted volatilities, where $\bar{r}$ is the mean of the log-returns $r_t$.
It seems that the recursive method $\widehat{\sigma}_{t}$ adapts more rapidly than the iterative one $\widetilde{\sigma}_{t}$ to changes in the S\&P500 Index observations $r_t$. 
Especially in Figure \ref{fig:plt_sp500_return_b}, under the COVID-19 crisis, we encountered a period with a substantial volatility increase.
Here, we observe $\widehat{\sigma}_{t}$'s ability to track changing volatilities better than $\widetilde{\sigma}_{t}$.

\begin{figure}[h!]
\centering
\includegraphics[width=1.0\linewidth]{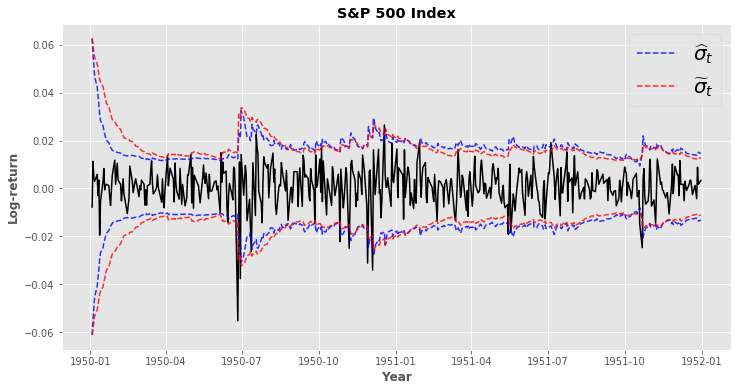}
\includegraphics[width=1.0\linewidth]{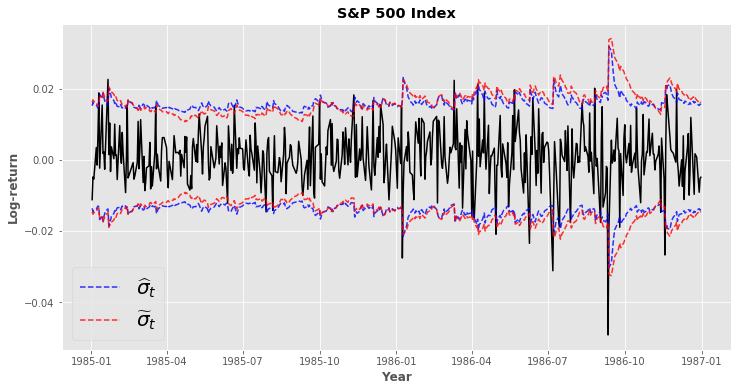}
\includegraphics[width=1.0\linewidth]{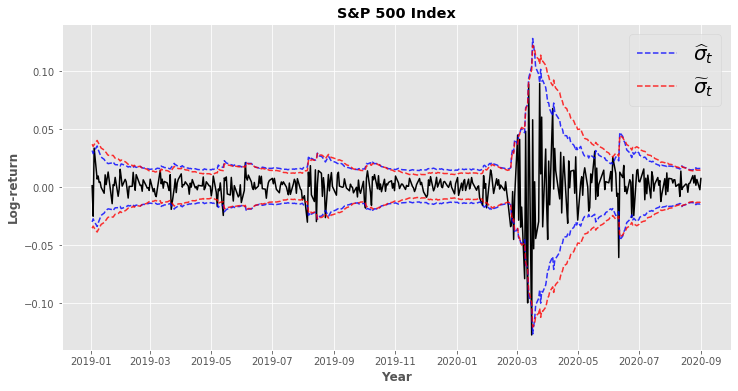}
\caption{Log-returns $r_t$ of S\&P500 Index (solid lines) and confidence intervals $\bar{r}\pm1.96\widehat{\sigma}_{t}$ and $\bar{r}\pm1.96\widetilde{\sigma}_{t}$ (dotted lines) using the recursive $\widehat{\sigma}_{t}$ (blue) and iterative $\widetilde{\sigma}_{t}$ (red) predicted volatilities, where $\bar{r}$ is the mean of the log-returns $r_t$. From top to bottom, we have Jan. 1950 to Jan. 1952, Jan. 1985 to Jan. 1987, and Jan. 2019 to Sep. 2020.}
\label{fig:plt_sp500_return_b}
\end{figure}

In the absence of the true (unobserved) variance process $(\sigma_{t}^{2})$, the efficiency of our recursive $(\widehat{\sigma}_{t})$ and the iterative $(\widetilde{\sigma}_{t})$ volatility can be appraised with the use of the squared log-returns $(r_{t}^{2})$.
We use the Mean Absolute Errors (MAE) defined by
\begin{align} \label{eq:acc_score_mae}
\widehat{\sigma}_{\text{MAE}}^{2} = \frac{1}{n} \sum_{t=1}^{n} \vert r_{t}^{2} - \widehat{\sigma}_{t}^{2} \vert
\text{ and }
\widetilde{\sigma}_{\text{MAE}}^{2} = \frac{1}{n} \sum_{t=1}^{n} \vert r_{t}^{2} - \widetilde{\sigma}_{t}^{2} \vert.
\end{align}
In Table \ref{tab:sp_500_garch_mae}, we consider the MAEs for the same periods used in Figure \ref{fig:plt_sp500_return_b}, including for the full dataset. 
The results in Table \ref{tab:sp_500_garch_mae} confirm our conclusions about Figure \ref{fig:plt_sp500_return_b}; the AdaVol method tracks the volatility better than the iterative method. 

\begin{table}[h!]
\begin{tabular}{lll} \hline
Period & $\widehat{\sigma}_{\text{MAE}}^{2}$ & $\widetilde{\sigma}_{\text{MAE}}^{2}$ \\ \hline
Jan. 1950 - Jan. 1952 & 8.2388 & 8.9049 \\
Jan. 1985 - Jan. 1987 & 7.1214 & 7.4723 \\ 
Jan. 2018 - Sep. 2020 & 26.9205 & 30.4775 \\ \hdashline
Jan. 1950 - Sep. 2020 & 10.1861 & 10.6731 \\ \hline
\end{tabular}
\centering
\caption{MAEs \eqref{eq:acc_score_mae} using log-returns $r_t$ of S\&P500 Index with the recursive $\widehat{\sigma}_{t}$ and iterative $\widetilde{\sigma}_{t}$ predicted volatilities.  Both methods has initial value given in \eqref{sp500_garch_thetahat0_1}. The $\widehat{\sigma}_{\text{MAE}}^{2}$ and $\widetilde{\sigma}_{\text{MAE}}^{2}$ numbers are scaled by $10^{-5}$.}
\label{tab:sp_500_garch_mae}
\end{table}

Figure \ref{fig:sp500_boxplot_ql_rinit} contains the results of one hundred $QS_{\alpha}$ scores using the recursive $(\widehat{\sigma}_{t})$ and iterative $(\widetilde{\sigma}_{t})$ volatility process, respectively, with random initial values in $\K$.
Remarkably, AdaVol outperforms the iterative method, although the latter uses future information, i.e., $(\widetilde{\sigma}_t)_{(k-2000)+1 \leq t \leq k}$ is estimated using $(r_t)_{1 \leq t \leq k}$ for $k=2000,4000, \dots, 16000, 17505$.
This indicates that one could achieve better performance using the recursive method, even if it only predicts volatility using previous information.

\begin{figure}[h!]
\centering
\includegraphics[width=1.0\linewidth]{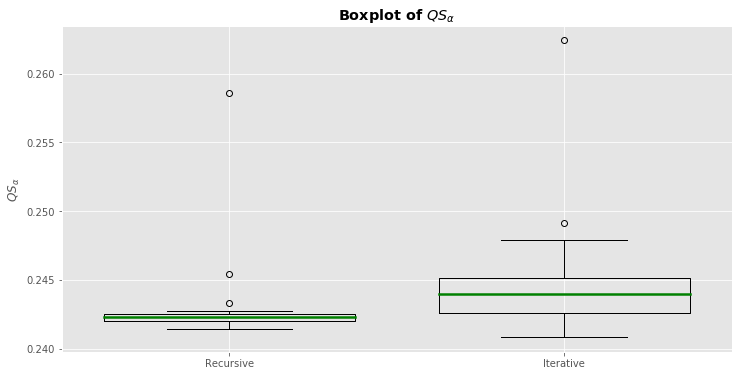}
\caption{Boxplots of one hundred $QS_{\alpha}$ scores with use of the recursive $\widehat{\sigma}_{t}$ and iterative $\widetilde{\sigma}_{t}$ volatility process, respectively, for $\alpha = \{0.01, 0.02, \dots, 0.99\}$, using the GARCH$(1,1)$ model on the log-returns $r_t$ of S\&P500 Index with random initial value in $\K$.}
\label{fig:sp500_boxplot_ql_rinit}
\end{figure}

%%%%%%%%%%%%%%%%%%%%%%%%%%%%%%%%%%%%%%%%%%%%%%%%%%%%
%%%%%%%%%%%%%%%%%%%%%%%%%%%%%%%%%%%%%%%%%%%%%%%%%%%%

\subsubsection{Other Stock Market Indices} \label{sec:idx}

%%%%%%%%%%%%%%%%%%%%%%%%%%%%%%%%%%%%%%%%%%%%%%%%%%%%
%%%%%%%%%%%%%%%%%%%%%%%%%%%%%%%%%%%%%%%%%%%%%%%%%%%%

We now extend our analysis to the remaining stock market indices from Table \ref{tab:idx_overview}, namely the CAC, DAX, DJIA, NDAQ, NKY, and RUT index.
In Figure \ref{fig:idx_return_b}, we can observe AdaVol's ability to adapt to time-varying parameters seems to hold for several stock market indices.
These figures show a clear benefit in recursive estimation as it increases adaptivity that may be advantageous under a financial crisis such as the COVID-19.

\begin{figure*}[h!]
\centering
\begin{tabular}{@{}cc@{}}
\includegraphics[width=0.5\linewidth]{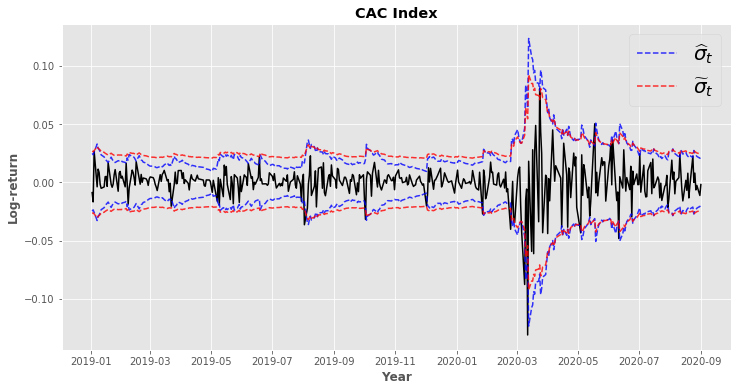} &
\includegraphics[width=0.5\linewidth]{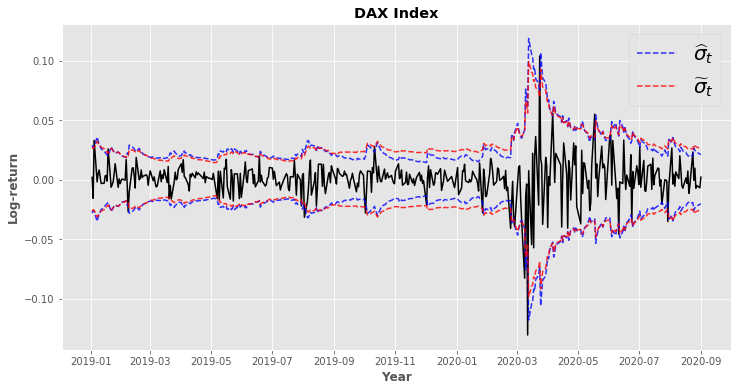} \\
\includegraphics[width=0.5\linewidth]{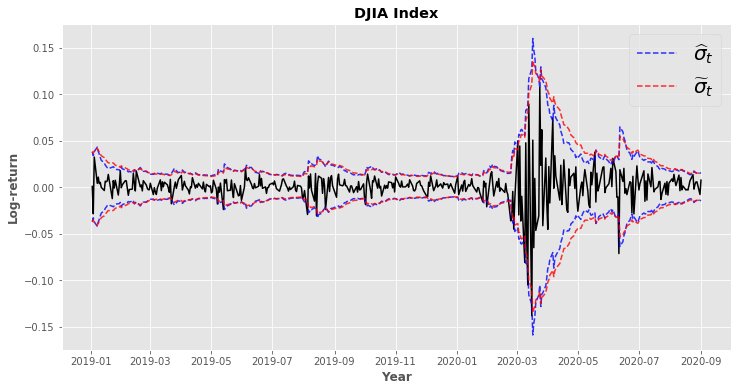} &
\includegraphics[width=0.5\linewidth]{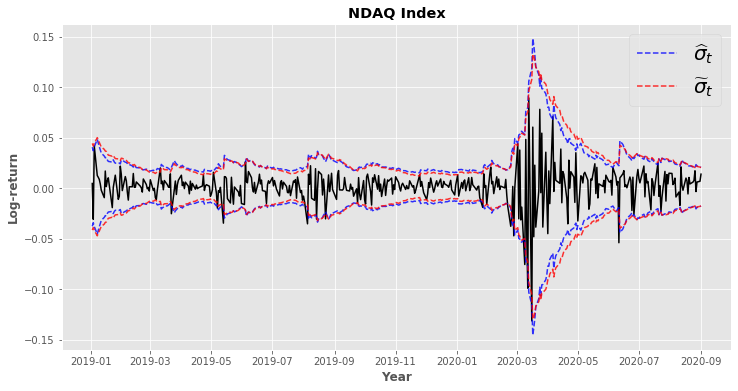} \\
\includegraphics[width=0.5\linewidth]{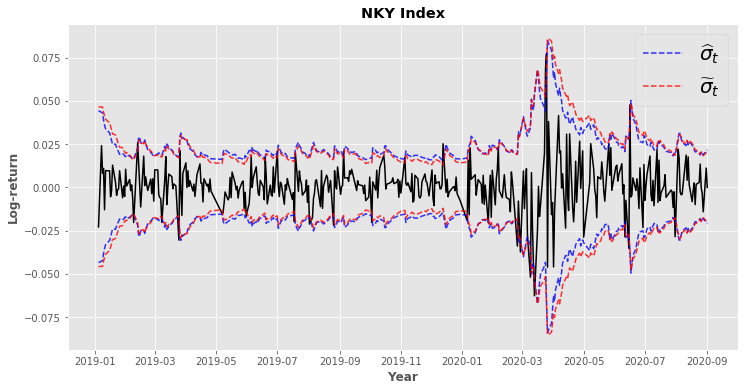} &
\includegraphics[width=0.5\linewidth]{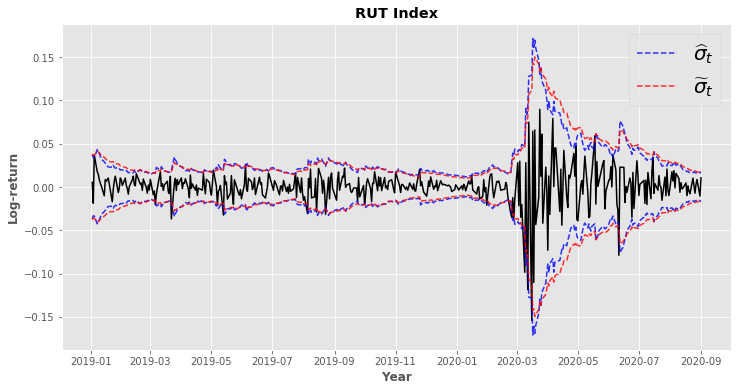}
\end{tabular}
\caption{Log-returns $r_t$ of the CAC (top-left), DAX (top-right), DJIA (mid-left), NDAQ (mid-right), NKY (bottom-left) and RUT (bottom-right) index (solid lines) and confidence intervals $\bar{r}\pm1.96\widehat{\sigma}_{t}$ and $\bar{r}\pm1.96\widetilde{\sigma}_{t}$ (dotted lines) using the recursive $\widehat{\sigma}_{t}$ (blue) and iterative $\widetilde{\sigma}_{t}$ (red) predicted volatilities, where $\bar{r}$ is the mean of the log-returns $r_t$. The period is Jan. 2019 to Sep. 2020.}
\label{fig:idx_return_b}
\end{figure*}

These conclusions are confirmed in Figure \ref{fig:idx_boxplot_ql_rinit}, where we have one hundred $QS_{\alpha}$ scores using the recursive $(\widehat{\sigma}_{t})$ and iterative $(\widetilde{\sigma}_{t})$ volatility process with random initial values in $\K$.
As for the S\&P500 Index (in Figure \ref{fig:sp500_boxplot_ql_rinit}), our findings indicate that the recursive approach estimates the $QS_{\alpha}$ quantiles better than the iterative method, both on average and with a lower spread.

The assumption of having an underlying data generation process with constant "true" parameters may not hold in real-life examples.
Thus, AdaVol seems to have an advantage compared to the iterative method, as it estimates the parameters step-by-step. 
In contrast, the iterative method always has to estimate the parameters using all observations over an extensive period of time.

\begin{figure*}[h!]
\centering
\begin{tabular}{@{}cc@{}}
\includegraphics[width=0.5\linewidth]{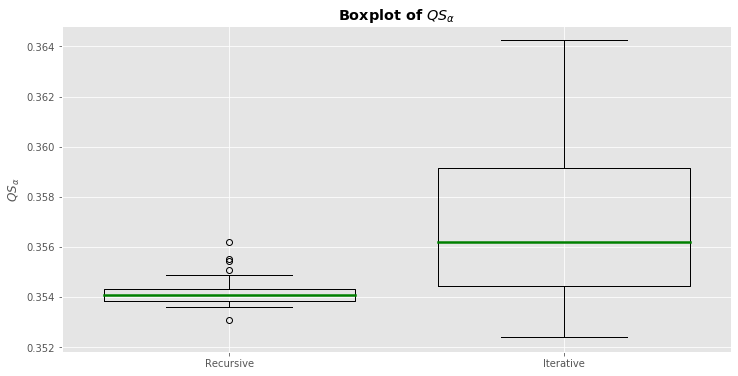} &
\includegraphics[width=0.5\linewidth]{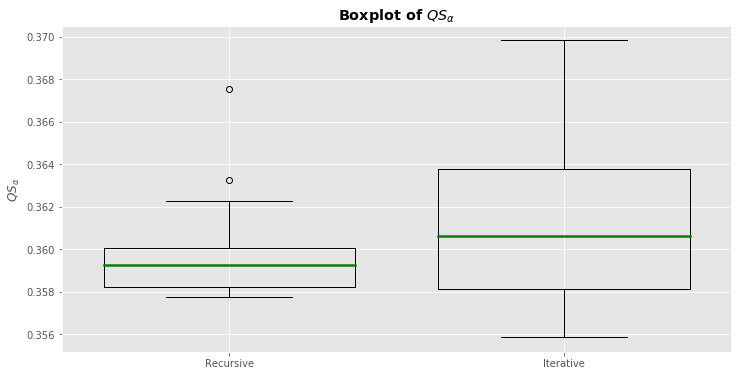} \\
\includegraphics[width=0.5\linewidth]{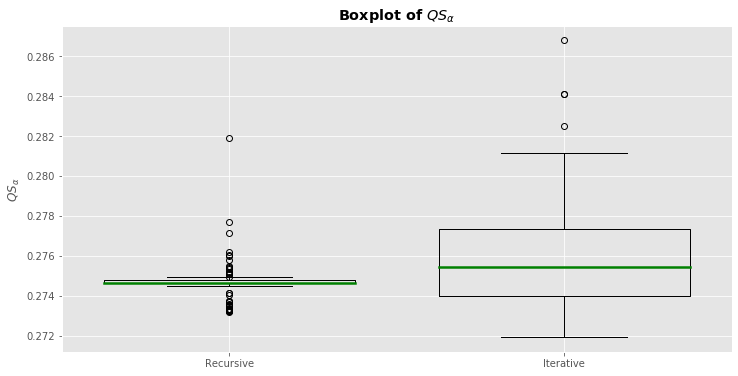} &
\includegraphics[width=0.5\linewidth]{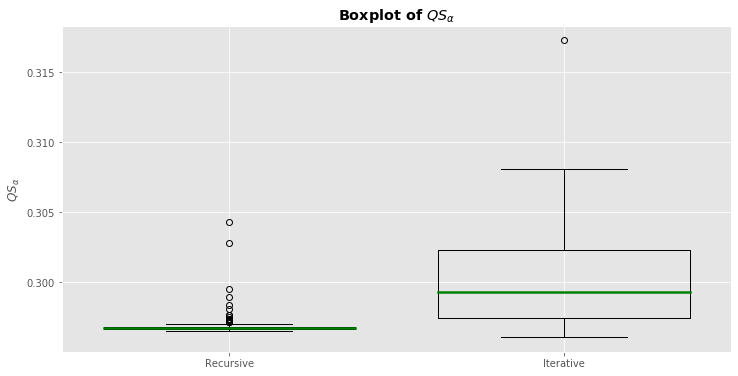} \\
\includegraphics[width=0.5\linewidth]{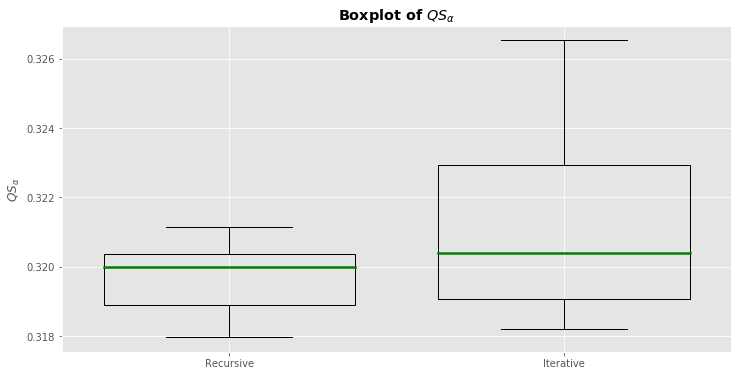} &
\includegraphics[width=0.5\linewidth]{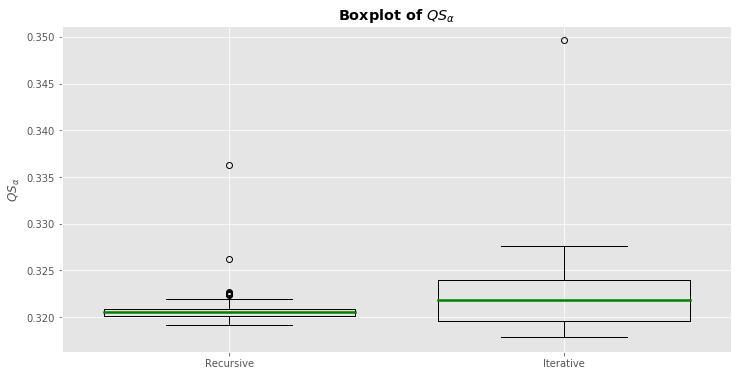}
\end{tabular}
\caption{Boxplots of one hundred $QS_{\alpha}$ scores with the use of the recursive $\widehat{\sigma}_{t}$ and iterative $\widetilde{\sigma}_{t}$ volatility process, respectively, for $\alpha = \{0.01, 0.02, \dots, 0.99\}$, using the GARCH$(1,1)$ model on the log-returns $r_t$ of the CAC (top-left), DAX (top-right), DJIA (mid-left), NDAQ (mid-right), NKY (bottom-left) and RUT (bottom-right) index with random initial values in $\K$.}
\label{fig:idx_boxplot_ql_rinit}
\end{figure*}

%%%%%%%%%%%%%%%%%%%%%%%%%%%%%%%%%%%%%%%%%%%%%%%%%%%%
%%%%%%%%%%%%%%%%%%%%%%%%%%%%%%%%%%%%%%%%%%%%%%%%%%%%

\section{Discussion} \label{sec:discussion}

%%%%%%%%%%%%%%%%%%%%%%%%%%%%%%%%%%%%%%%%%%%%%%%%%%%%
%%%%%%%%%%%%%%%%%%%%%%%%%%%%%%%%%%%%%%%%%%%%%%%%%%%%

We proved asymptotic local convexity of the QL function in general conditionally heteroscedastic time series models of multiplicative form. 
An interesting question arises: can one prove Theorem \ref{THM_Convex} for a bounded set of $N$ observations? 
Expressed differently, can one find a $N$ bounded, such that we have convergence/convexity of recursive algorithms, e.g., for the GARCH, EGARCH, and AGARCH models. 
To our knowledge, this has not been proved yet.

We proposed an adaptive approach to recursively estimate GARCH model parameters in a streaming setting using the VTE technique (AdaVol). 
AdaVol's design showed to produce resilient and adaptive estimates in our empirical investigations.
The adaptation to time-varying parameters was a surprising advantage that appeared when we applied our method to real-life observations. 
As the assumption of having constant estimates seems not to be the case for the stock indices we analyze, then it is beneficial to have the ability to adapt.
One could facilitate this ability more by incorporating a rolling volatility estimation of $\gamma$ instead of using the sample volatility. 
Combining this with a different learning rate than AdaGrad, which enables continuous learning (e.g., ADAM by \cite{kingma2015}), could encourage adaptability.

The stability of using our recursive approach to solve the QML problem could be improved by using a mini-batch approach. 
A mini-batch approach will lower each incremental volatility as one uses more observations per recursion to update the QML estimate. 
Applying a mini-batch method does not require much more computational power than the stochastic gradient descent, only $\bigO(bd)$, where $b$ is the number of observations used in each (mini-batch) recursion.
Using more observations, we could achieve more consistency and smoothness in the estimation procedure's convergence while keeping favorable computational costs.

Furthermore, an accelerated convergence of our estimates could be obtained by recursion averaging, also called Polyak-Ruppert averaging, which is guaranteed under fairly relaxed conditions (\cite{polyak1992,ruppert1988}).
This Polyak-Ruppert average estimate could be utilized solely or employed as a benchmark to detect structural breaks.

%%%%%%%%%%%%%%%%%%%%%%%%%%%%%%%%%%%%%%%%%%%%%%%%%%%%
%%%%%%%%%%%%%%%%%%%%%%%%%%%%%%%%%%%%%%%%%%%%%%%%%%%%

\section*{Acknowledgement}

%%%%%%%%%%%%%%%%%%%%%%%%%%%%%%%%%%%%%%%%%%%%%%%%%%%%
%%%%%%%%%%%%%%%%%%%%%%%%%%%%%%%%%%%%%%%%%%%%%%%%%%%%

This work was supported by a grant from R\'egion Ile de France.

%%%%%%%%%%%%%%%%%%%%%%%%%%%%%%%%%%%%%%%%%%%%%%%%%%%%
%%%%%%%%%%%%%%%%%%%%%%%%%%%%%%%%%%%%%%%%%%%%%%%%%%%%

\appendix

%%%%%%%%%%%%%%%%%%%%%%%%%%%%%%%%%%%%%%%%%%%%%%%%%%%%
%%%%%%%%%%%%%%%%%%%%%%%%%%%%%%%%%%%%%%%%%%%%%%%%%%%%

\section{Proofs} \label{sec:appendix:proofs}

%%%%%%%%%%%%%%%%%%%%%%%%%%%%%%%%%%%%%%%%%%%%%%%%%%%%
%%%%%%%%%%%%%%%%%%%%%%%%%%%%%%%%%%%%%%%%%%%%%%%%%%%%

\begin{proof}[Proof of Theorem \ref{THM_Convex}]
To prove local strong convexity for the approximate QL function $\widehat{L}_n$ using the approximate QMLE $\widehat{\theta}_n^{*}$, we first list some bounds for the Hessians: under the regularity conditions on the derivatives of $h_t$, then using \eqref{ONE_LIKE_HAT}, we can write
\begin{align*}
\nabla_{\theta} l_t (\theta) 
=& \frac{1}{2} \frac{\nabla_{\theta} h_t(\theta)}{h_t(\theta)} \left( 1- \frac{X_t^2}{h_t(\theta)} \right)
\end{align*}
and
\begin{align*}
\nabla_{\theta}^{2} l_t (\theta) 
=& \frac{1}{2h_t^2(\theta)} \Bigg( \nabla_{\theta} h_t(\theta)^T \nabla_{\theta} h_t(\theta) \left(  \frac{2X_t^2}{h_t(\theta)} -1 \right) 
+ \nabla_{\theta}^{2} h_t(\theta) \left( h_t(\theta) - X_t^2 \right) \Bigg),
\end{align*}
where the Hessian $H_n(\theta)$ is defined as $n^{-1}\nabla_{\theta}^{2} L_{n} \left(\theta \right) = n^{-1} \sum_{t=1}^{n} \nabla_{\theta}^{2} l_t (\theta)$. Similarly, for $\nabla_{\theta} \widehat{l}_t (\theta)$, $\nabla_{\theta}^{2} \widehat{l}_t (\theta)$, and $\widehat{H}_n (\theta)$, we replace $h_t(\theta), \nabla_{\theta} h_t(\theta)$ and $\nabla_{\theta}^{2} h_t(\theta)$ by $\widehat{h}_t(\theta),\nabla_{\theta} \widehat{h}_t(\theta)$ and $\nabla_{\theta}^{2} \widehat{h}_t(\theta)$, respectively. 
From Assumption \ref{assump:cond_bounds}, we know $n^{-1}\norm{\nabla_{\theta}^{2} \widehat{L}_n  - \nabla_{\theta}^{2} L_n }_\K \overset{\text{a.s.}}{\longrightarrow} 0$ for $n \rightarrow \infty$. Hence, for some random $N_{1}$ large enough, there exists $\eps>0$ such that $n^{-1} \norm{ \nabla_{\theta}^{2} \widehat{L}_n  - \nabla_{\theta}^{2} L_n}_\K < \eps$ for all $n \geq N_{1}$ a.s. 
As a consequence, we get
\begin{align} \label{HES_NORM_1}
\norm{ \widehat{H}_n - H_n}_{\K}  < \eps, \quad \text{a.s.},
\end{align}
for all $n \geq N_{1}$.
Similarly, applying the ergodic theorem on the  integrable sequence (uniformly over $\K$) $(\nabla_{\theta}^{2} l_t)$ of continuous functions over the compact set $\K$, we obtain $\norm{n^{-1} \sum_{t=1}^n \nabla_{\theta}^{2} l_t   -  \E[ \nabla_{\theta}^{2} l_0 ] }_\K \overset{\text{a.s.}}{\longrightarrow } 0$ for $n \rightarrow \infty$. Then there exists $N_{2}$ such that 
\begin{align} \label{HES_NORM_2}
\norm{  H_n - H_0 }_{\K} < \eps,  \quad \text{a.s.},
\end{align}
for all $n \geq N_{2}$.
Thus, by equation \eqref{HES_NORM_1} and \eqref{HES_NORM_2}, we know there exists $N = \max(N_{1},N_{2})$ such that for all $n \geq N$, we have
\begin{align*}
\norm{\widehat{H}_n - H_0 }_{\K}
\leq \norm{\widehat{H}_n - H_n }_{\K} + \norm{H_n - H_0 }_{\K} 
< 2 \eps, \quad \text{a.s.}
\end{align*}
Especially, as $\norm{\widehat{H}_n - H_0 }_{\K}$ is defined as $\sup_{\theta \in \K} \norm{\widehat{H}_n(\theta) - H_0(\theta) }_{op}$, then
\begin{align} \label{eq:hes_op_bound}
\norm{\widehat{H}_n(\theta) - H_0(\theta) }_{op} < 2 \epsilon,
\end{align}
for all $\theta \in \K$.

From \cite[Lemma 7.2]{straumannmikosch2007}, the asymptotic Hessian $H_0 (\theta_0) = \E [\nabla_{\theta}^{2} l_0 (\theta_0)]$ is a symmetric positive definite matrix a.s. under Assumption \ref{assump:cond_comp_inde}.
As $H_0 (\theta)$ is the limit of the continuous matrix-valued function $H_n(\theta)$, it is itself a continuous matrix-valued function. 
Thus, the eigenvalue function $\lambda_{0}^{i}(\theta)$ for $1\leq i \leq d$ of $H_0 (\theta)$ is also continuous.
The eigenvalues $\lambda_{0}^{i}(\theta_0)$ are positive real numbers with the smallest one $\lambda_{0}^{\min}(\theta_0)$ denoted by
\begin{align*}
\lambda_0^{\min}(\theta_0) = \min_{1\leq i \leq d} \lambda_{0}^{i} (\theta_0) >0,
\end{align*}
satisfying $g^T H_0 (\theta_0) g \geq \lambda_{0}^{\min}(\theta_0) g^Tg$ for all  $g \in \R^{d}\setminus \{ 0\}$.

To shorten the notation, we write with no ambiguity
$H_0 (\theta_0) \succeq \lambda_{0}^{\min} (\theta_0) I_d$ where $I_d$ denotes the $d$-dimensional identity matrix.
By continuity, $\lambda_{0}^{\min} (\theta)$ is positive on a neighborhood $B(\theta_0,\delta)$ such there exist $\eps>0$ satisfying $\lambda_{0}^{\min} (\theta_0) - \eps>0$, meaning
\begin{align*}
H_0(\theta) \succeq (\lambda_{0}^{\min} (\theta_0) - \eps)I_d, 
\end{align*} 
for $\theta \in B(\theta_0,\delta)$.
Hence, for $\theta \in B(\theta_0,\delta)$ and $g \in \R^{d} \setminus \{ 0\}$, we have
\begin{align*}
\frac{g^T \widehat{H}_n (\theta)}{g^Tg} 
&= \frac{g^T H_0 (\theta) g}{g^Tg}
 + \frac{g^T \left( \widehat{H}_n (\theta) - H_0(\theta) \right) g}{g^Tg}
\\ &\geq \lambda_{\min} - \eps - \frac{g^T \norm{ \widehat{H}_n (\theta) - H_0 (\theta) }_{op}  g}{g^Tg} 
\\ &> \lambda_{\min} - 3\eps
\\ &>C, \quad \text{a.s.,}
\end{align*}
using \eqref{eq:hes_op_bound} for all $n \geq N$ by taking $0<\eps < 6^{-1} \lambda_{\min}$ and letting $C = 2^{-1} \lambda_{\min}$. 
Then we have the desired inequality \eqref{CONVEX_HESSIAN_QLIK}.
\end{proof}

\begin{proof}[Proof of Corollary \ref{COR_Convex}]
The uniqueness of the QMLE $\widehat{\theta}_n^{*}$ follows from a Pfanzagl argument (\cite{pfanzagl1969}). 
By Theorem \ref{THM_Convex}, we know there exists $N$ such that
\begin{align*}
\inf_{\theta \in B(\theta_0,\delta_0)} g^T \widehat{H}_n(\theta) g > C g^Tg, \quad \text{a.s.,}
\end{align*}
for all $n \geq N$ where $B(\theta_0, \delta_0)$ denotes the open ball around $\theta_0$ with radius $\delta_0>0$.
For each element $\theta_i \in \K$, we make an open ball $B(\theta_i, \delta_i)$ for $\delta_i>0$ such that the union of $B (\theta_i, \delta_i)$ for all $i$ only contains $\theta_0$ once, i.e., $\theta_0 \notin B (\theta_i,\delta_i)$ for $i \neq 0$. 
As $\K$ is compact and contained in the union of all $B (\theta_i, \delta_i)$, then there is a finite covering of $\K$, i.e., $\K \subseteq \bigcup_{i=0}^k B (\theta_i , \delta_i)$.
Let $\K' = \K \setminus B (\theta_0 , \delta_0)$. 
As $\K'$ is compact, the minimum of the continuous QL function $\E [l_0]$ exists. 
Moreover, as $\E [l_0]$ is a unique minimum at $\theta_0$ under Assumption \ref{assump:cond_exists_solution}, we get
\begin{align*}
\inf_{\theta \in  \K' } \E [l_0 (\theta)] >  \E [l_0 (\theta_0)], \quad \text{a.s.}
\end{align*}
From Assumption \ref{assump:cond_bounds}, we know that $\norm{n^{-1} \widehat{L}_n - L_0}_{\K'} \overset{\text{a.s.}}{\longrightarrow} 0$ as $n \rightarrow \infty$. 
Hence, we have
\begin{align*}
\inf_{\theta \in \K'} n^{-1} \widehat{L}_n(\theta) \overset{\text{a.s.}}{\longrightarrow} \inf_{\theta \in \K'} L_0(\theta),
\end{align*}
where $\inf_{\theta \in \K'} L_0(\theta) > \E [l_0 (\theta_0)]$.
Thus, the $B(\theta_0, \delta_0)$ gives us a unique global minimum of the QL function $\widehat{L}_n$, i.e.,
\begin{align*}
\inf_{\theta \in \K} n^{-1} \widehat{L}_n(\theta) \geq \E [l_0(\theta_0)], \quad \text{a.s.,}
\end{align*}
where equality only is attained when $\theta = \theta_0$.
\end{proof}

%%%%%%%%%%%%%%%%%%%%%%%%%%%%%%%%%%%%%%%%%%%%%%%%%%%%
%%%%%%%%%%%%%%%%%%%%%%%%%%%%%%%%%%%%%%%%%%%%%%%%%%%%

\section{Relative Speed Comparison} \label{sec:appendix:speed}

%%%%%%%%%%%%%%%%%%%%%%%%%%%%%%%%%%%%%%%%%%%%%%%%%%%%
%%%%%%%%%%%%%%%%%%%%%%%%%%%%%%%%%%%%%%%%%%%%%%%%%%%%

It is argued that the recursive procedure AdaVol is computationally advantageous as it only processes observations once.
In order to illustrate this advantage, a relative computational speed comparison as in \cite{sucarrat2020} is presented. 
The code is not optimized; it is solely for illustration purposes.
In the streaming data framework, the parameters are estimated recursively as described in Section \ref{sec:applications}.
Meaning, for each $t$, the iterative estimate $\widetilde{\theta}_{t}$ is estimated using the observations $(X_{i})_{1 \leq i \leq t}$ and the previous iterative estimate $\widetilde{\theta}_{t-1}$ as initialization.

An ARCH$(1)$, GARCH$(1,1)$, and GARCH$(2,2)$ model is considered for the computational speed analysis.
Table \ref{tab:relative_speed} shows the relative speed comparison for these models with sample sizes $n=1000$ and $n=2000$.
The overall conclusion is that the AdaVol procedure is faster than the iterative one, e.g., the iterative estimation of a GARCH$(1,1)$ model is about $205$ times slower.
Another important observation is the relative speed for different sample sizes $n$, namely, the larger the sample size $n$, the greater the relative speed gain is for the AdaVol procedure.

\begin{table}[h!]
\centering
\begin{tabular}{lccc}
\hline
Model & $n$ & AdaVol & \texttt{arch} \\ \hline
ARCH$(1)$ & $1000$ & $1.00$ & $163.64$ \\
& $2000$ & $1.00$ & $190.12$ \\
GARCH$(1,1)$ & $1000$ & $1.00$ & $204.89$ \\
& $2000$ & $1.00$ & $233.86$ \\
GARCH$(2,2)$ & $1000$ & $1.00$ & $322.33$ \\
& $2000$ & $1.00$ & $328.50$ \\
\hline
\end{tabular}
\caption{Relative speed comparison between AdaVol (\cite{werge2019}) and \texttt{arch} version $4.15$ (\cite{sheppard2020}).
A value of $1.00$ means the method is the fastest. 
A value of $163.64$ means the estimation time of the method is $163.64$ times larger than the fastest.}
\label{tab:relative_speed}
\end{table}

%%%%%%%%%%%%%%%%%%%%%%%%%%%%%%%%%%%%%%%%%%%%%%%%%%
%%%%%%%%%%%%%%%%%%%%%%%%%%%%%%%%%%%%%%%%%%%%%%%%%%

\bibliography{bibliography}
\bibliographystyle{elsarticle-harv} 
%\printbibliography

%%%%%%%%%%%%%%%%%%%%%%%%%%%%%%%%%%%%%%%%%%%%%%%%%%%%
%%%%%%%%%%%%%%%%%%%%%%%%%%%%%%%%%%%%%%%%%%%%%%%%%%%%

\end{document}